\newtheorem*{proposition}{Proposition}
\renewcommand{\vec}[1]{\bm{#1}}
\newcommand{\mat}[1]{\bm{#1}}
\newcommand{\Pp}{\mathbb{P}}
\newcommand{\E}{\mathbb{E}}
\newcommand{\Ee}{\mathcal{E}}
\newcommand{\wh}[1]{\widehat{#1}}
\newcommand{\thi}{\theta_i}
\newcommand{\thj}{\theta_j}
\newcommand{\Bb}{\mathcal{B}} 
\newcommand{\NC}{\mathcal{N}}
\newcommand{\const}{\mbox{(const.)}}
\begin{document}

%%%%

\title{Inference of Edge Correlations in Multilayer Networks}

\author{A. Roxana Pamfil}
\author{Sam D. Howison}%
\affiliation{Mathematical Institute, University of Oxford}

\author{Mason A. Porter}
\affiliation{Department of Mathematics, University of California Los Angeles}%
\affiliation{Mathematical Institute, University of Oxford}%

\date{\today}

\begin{abstract}

Many recent developments in network analysis have focused on multilayer networks, which one can use to encode time-dependent interactions, multiple types of interactions, and other complications that arise in complex systems. Like their monolayer counterparts, multilayer networks in applications often have mesoscale features, such as community structure. A prominent type of method for inferring such structures is the employment of multilayer stochastic block models (SBMs). A common (but {potentially} inadequate) assumption of these models is the sampling of edges in different layers independently, conditioned on the community labels of the nodes. In this paper, we relax this assumption of independence by incorporating edge correlations into an SBM-like model. We derive maximum-likelihood estimates of the key parameters of our model, and we propose a measure of layer correlation that reflects the similarity between connectivity patterns in different layers. Finally, we explain how to use correlated models for edge ``prediction'' (i.e., inference) in multilayer networks. By taking into account edge correlations, prediction accuracy improves both in synthetic networks and in a temporal network of shoppers who are connected to previously-purchased grocery products.
\end{abstract}

%%%%

\maketitle

%%%%

\section{Introduction}\label{sec:introduction}

A network is an abstract representation of a system in which entities called ``nodes'' interact with each other via connections called ``edges'' \cite{newman2018book}. Most traditionally, in a type of network called a ``graph'', each edge encodes an interaction between a pair of nodes. Networks arise in many domains and are useful for numerous practical problems, such as detecting bot accounts on Twitter \cite{davis2016}, finding vulnerabilities in electrical grids \cite{pagani2013}, and identifying potentially harmful interactions between drugs \cite{guimera2013}. A common feature of many networks is mesoscale (i.e., intermediate-scale) structures. Detecting such structures amounts to a type of coarse-graining, providing representations of a network that are more compact than listing all of the nodes and edges. Types of mesoscale structures include community structure \cite{fortunato2016community}, core--periphery structure \cite{csermely2013,rombach2017}, role similarity \cite{rossi2015}, and others. An increasingly popular approach for modeling and detecting such structures is by using stochastic block models (SBMs) \cite{peixoto2017bayesian}, a generative model that can produce networks with community structure or other mesoscale structures. 

For many applications of network analysis, it is important to move beyond ordinary graphs (i.e., ``monolayer networks") to examine more complicated network structures, such as collections of interrelated networks. One can study such structures through the flexible lens of multilayer networks \cite{kivela2014,bocca2014,Porter2018,nutshell2019}. Similar to monolayer networks, a multilayer network consists of a collection of ``state nodes'' that are connected pairwise by edges. A state node is a manifestation of a ``physical node'' (which we will also sometimes call simply a ``node''), which represents some entity, in a specific layer. Different layers may correspond to interactions in different time periods (yielding a temporal network), different types of relations (yielding a multiplex network), or other possibilities. As in the setting of monolayer networks, modeling and inferring mesoscale structures in multilayer networks is a prominent research area.

A key assumption of almost all existing models of multilayer networks with mesoscale structure is that edges are generated independently, conditioned on a multilayer partition \cite{bazzi2016generative,debacco2017,stanley2016,ghasemian2016,peixoto2015modeling,peixoto2015,
valles2016multilayer,taylor2016}. This independence condition applies both within each layer (which is inconsistent with the fact that real networks often include $3$-cliques and other small-scale structures) and across layers (which is inconsistent with the fact that the same nodes are often adjacent to each other in multiple layers). In this paper, we focus on relaxing the edge-independence assumption that applies to edges between the same two physical nodes in different layers. We still consider each pair of nodes independently.

In Fig.~\ref{fig:correlatedNetworkDiagram}, we show an example of a two-layer network with both strong positive and strong negative edge correlations. Incorporating such correlations into a network model is beneficial for many applications. For example, a multiplex network of air routes, where each layer corresponds to one airline, is likely to include some popular routes that appear in multiple layers (and unpopular routes may appear only in one layer)\cite{cardillo2013}. In a temporal social network, we expect people to have repeated interactions with other people \cite{aslak2018}; this is a stronger statement than just saying that they tend to interact more within the same community over time. Such edge persistence is also common in many bipartite user--item networks: shoppers tend to buy the same grocery products over time \cite{pamfil2018thesis}, customers of a music-streaming platform listen repeatedly to their favorite songs \cite{park2010}, and Wikipedia users edit specific pages several times \cite{yasseri2012}.

\begin{figure}[htb!]
	\includegraphics[width=0.35\textwidth]{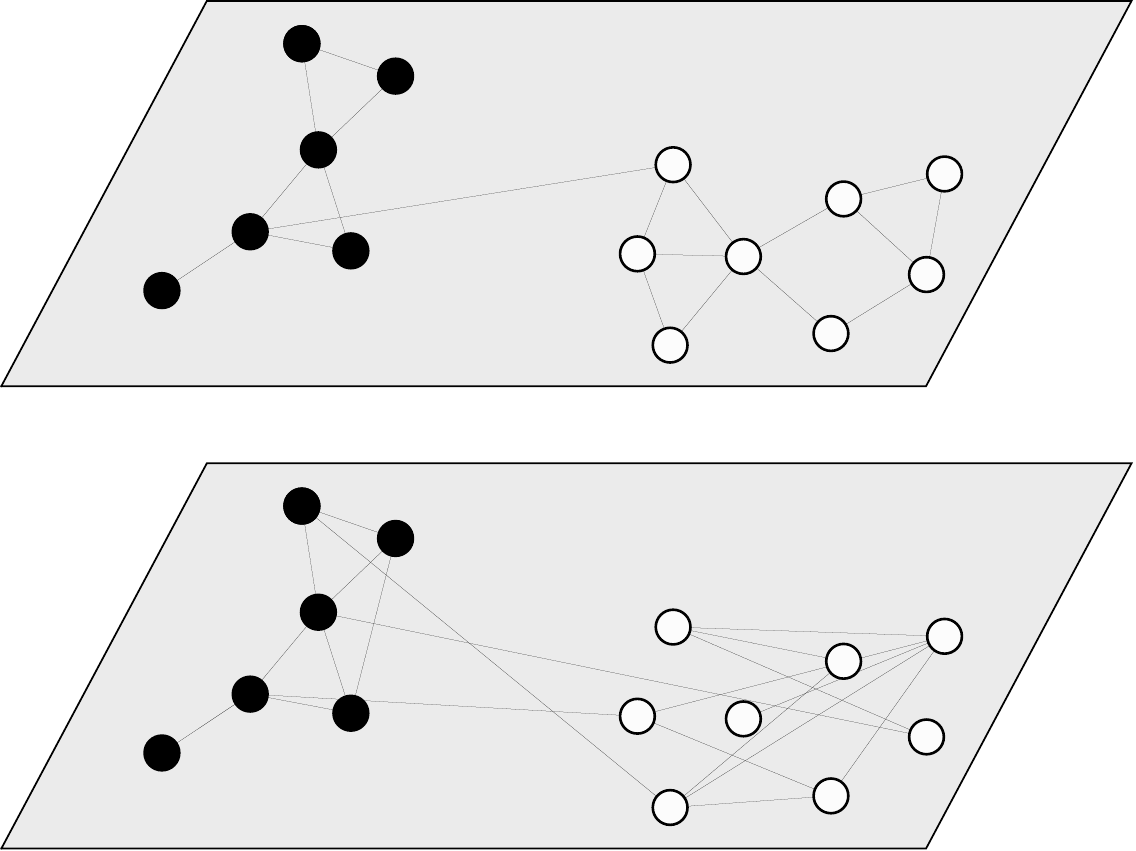}
	\caption{Example of a correlated multilayer network with two layers and with two blocks of nodes in each layer. Edges between two black nodes are positively correlated across layers, edges between two white nodes are negatively correlated across layers, and edges between a black node and a white node are uncorrelated across layers.}
	\label{fig:correlatedNetworkDiagram}
\end{figure}

Multilayer network models that incorporate edge correlations have many important applications. One is the inference task of \textit{edge prediction} (also called \textit{link prediction}), where one seeks to assign probabilities of occurrence to unobserved edges. SBMs have often been used for edge prediction for both monolayer networks \cite{aicher2014,guimera2013,peixoto2017bayesian} and multilayer networks \cite{debacco2017,valles2016multilayer,tarres2018}. 
The models that we propose should yield better performance than past efforts, as they take advantage of interlayer edge correlations in data. We discuss this in more detail in Sec.~\ref{sec:linkPrediction}. Another application is to \textit{graph matching} \cite{conte2004}, where one seeks to infer a latent correspondence between nodes in two different networks when one does not know the identities of the nodes. For example, one may wish to match common users between anonymized Twitter and Facebook networks. In a series of papers \cite{lyzinski2014,lyzinski2015,lyzinski2017}, Lyzinski and collaborators established conditions under which graph matching is successful. They tested their methods on correlated Erd\H{o}s--R\'{e}nyi (ER) networks and correlated SBMs, which we also investigate in this paper (albeit for a different purpose). We take these models further by incorporating degree correction \cite{karrer2011}, which generates networks with heterogeneous degree distributions and is important for inference using SBM. This extension may allow one to study the graph-matching problem on more realistic network models. A third application of our work is efficient computation of correlations between pairs of layers of a multilayer network. One can use such correlation estimates to quantify the similarity between different layers and potentially to compress multilayer network data by discarding (or merging) layers that are strongly positively correlated with an existing layer \cite{dedomenico2015structural}. Previous papers have focused primarily on node-centric notions of layer similarity \cite{iacovacci2015,nicosia2015,battiston2014}, whereas our correlated models yield edge-centric measures of similarity. Benefits of our approach over related studies \cite{kao2017,dedomenico2015structural,dedomenico2016spectral} include the fact that correlation values cover an intuitive range (between $-1$ and $1$) and that they work equally well for quantifying layer similarity and dissimilarity.

Our edge-correlated network models are also useful for community detection. Given a multilayer network, one can design an inference algorithm that determines both the parameters that describe the edge probabilities (and correlations) and a multilayer community structure that underlies these probabilities. Solving this inference problem enables the detection of ``correlated communities". Because of the additional complexity in the model, this is bound to be more difficult than standard multilayer community detection, so we leave this inference problem for future work. Instead, for the rest of the paper, we assume that we know the block structure of a network; we infer the remaining parameters, including the correlations that are the core element of our model \footnote{In fact, previous algorithms for SBM inference \cite{aicher2014,newman2016annotated,debacco2017,stanley2016} have suggested that one can formulate a full inference algorithm using an iterative approach that alternates between estimating a multilayer network's mesoscale structure and estimating its edge probabilities and correlations. Therefore, the latter task --- which is the focus of the present paper --- constitutes one half of an algorithm for correlated-community detection.}. We also assume that the block assignments $\vec{g}$ are the same for all layers; the case in which communities can vary arbitrarily across layers is significantly more difficult \cite{pamfil2018}, and we leave its consideration for future work. With these restrictions, one can determine $\vec{g}$ using any method of choice.

Some existing models of multilayer networks incorporate interlayer dependencies by prescribing joint degree distributions \cite{lee2012,min2014,nicosia2015}, by incorporating edge overlaps \cite{marceau2011, funk2010}, or by modeling the appearance of new edges through preferential-attachment mechanisms \cite{kim2013}. The models that we describe in this paper are similar to those that were introduced in \cite{lyzinski2014,lyzinski2015,lyzinski2017} for graph-matching purposes. Another noteworthy paper is one by Barucca et al. \cite{barucca2017} that described a generalized version of the temporal SBM of Ghasemian et al. \cite{ghasemian2016}. This generalization includes an ``edge-persistence" parameter $\xi$, which gives the probability that an edge from one layer also occurs in the next temporal layer. For several reasons, we take a different approach. First, the model of \cite{barucca2017} is specific to temporal networks, whereas we are also interested in other types of multilayer networks. Second, their model does not easily incorporate degree correction. Third, we want to include correlations explicitly in the model, rather than implicitly using the edge-persistence parameter $\xi$. 

Our paper proceeds as follows. In Sec.~\ref{sec:corrModels}, we describe our models of multilayer networks with edge correlations. We start with a simple example of correlated Erd\H{o}s--R\'{e}nyi (ER) graphs in Sec.~\ref{subsec:ER} to make our exposition for more complicated models easier to follow. In Sec.~\ref{subsec:SBM}, we integrate mesoscale structures by incorporating correlations in an SBM-like model. We then introduce degree correction in Sec.~\ref{subsec:DCSBM}. For all of these models, we derive maximum-likelihood (ML) estimates both of the marginal edge-existence probabilities in each layer and of the interlayer correlations. {ML estimation is common for SBMs and DCSBMs in both monolayer networks \cite{karrer2011} and multilayer networks \cite{stanley2016}. Although ML estimation is less powerful than performing Bayesian inference \cite{peixoto2017bayesian}, the former is consistent for both SBMs and DCSBMs \cite{zhao2012} and it recovers many common techniques for detecting mesoscale structures in networks \cite{young2018}.} In Sec.~\ref{sec:linkPrediction}, we describe how to use our models for edge prediction, and we provide some results for synthetic networks. We then proceed with applications in Sec.~\ref{sec:applications}. In Sec.~\ref{subsec:layerCorr}, we use our models to estimate pairwise layer correlations in several empirical networks. In Sec.~\ref{subsec:shopping}, we use our correlated models for edge prediction in a temporal network of grocery purchases. We summarize our results in Sec.~\ref{sec:conclusions} and discuss a few ideas for future work.

%%%%%%

\section{Correlated Models}\label{sec:corrModels}

In our derivations, we consider just two network layers at a time. Although this may seem limiting, we can apply our framework to generate correlated networks with more than two layers in a sequential manner (see the discussion in Sec. \ref{subsec:ER}), and we can determine pairwise layer correlations for a network with arbitrarily many layers (see the applications in Sec.~\ref{sec:applications}). In Sec.~\ref{sec:conclusions}, we briefly discuss the challenges that arise when modeling three or more layers simultaneously, rather than in a sequential pairwise fashion. 

{Consider a network with two layers and identical sets of nodes in each layer; this is known as a \textit{node-aligned} multilayer network \cite{kivela2014}.}
Let $\mat{A}^1$ and $\mat{A}^2$ denote the adjacency matrices of our two network layers. As in many generative models of networks, we assume that edges in these two layers are generated by some random process, so the entries $A_{ij}^1$ and $A_{ij}^2$ are random variables. Imposing some statistical correlation between these two sets of random variables introduces interlayer correlations in the resulting multilayer network structure. 

Our goal is to propose a model of correlated networks in which each layer is, marginally, a degree-corrected stochastic block model (DCSBM) \cite{karrer2011}. However, it is instructive to first consider the simpler cases in which each layer is marginally an Erd\H{o}s--R\'{e}nyi random graph (see Sec.~\ref{subsec:ER}) or an SBM without degree correction (see Sec.~\ref{subsec:SBM}). Correlated ER models and correlated SBMs have been studied previously, most notably in work by Lyzinski et al. \cite{lyzinski2014,lyzinski2015,lyzinski2017} on the graph-matching problem. However, our use of these models for estimating layer correlations is novel, as are the correlated DCSBMs that we propose in Sec.~\ref{subsec:DCSBM}.

In monolayer SBMs, it is common to use either Bernoulli or Poisson random variables to generate edges between nodes. The former is generally more accurate, because it does not yield multiedges; however, the latter is more common, as it often simplifies calculations considerably \cite{zhao2012,perry2012}. Nevertheless, we have found that Bernoulli models are simpler when incorporating correlations \cite{pamfil2018thesis}. They also have several other advantages, including the fact that they work for both sparse and dense networks and that they can handle the entire correlation range between $-1$ and $1$. Therefore, we consider only Bernoulli models in this paper.

%%%%%

\subsection{Correlated Erd\H{o}s--R\'{e}nyi Layers}\label{subsec:ER}

%%%%%

\subsubsection{Forward model}

{Assume that the intralayer networks that correspond to $\mat{A}^1$ and $\mat{A}^2$} are ER graphs from the $\mathcal{G}(n,p)$ ensemble \cite{newman2018book} with edge probabilities $p_1$ and $p_2$. 
For each pair of nodes $(i,j)$, we therefore have
\begin{align}
	\Pp(A_{ij}^1=1) &= p_1\,, \\
	\Pp(A_{ij}^2=1) &= p_2\,.
\end{align}
To couple edges that connect the same pair of nodes in different layers, let 
\begin{equation}
	q:= \Pp(A_{ij}^1=1,A_{ij}^2=1)
\end{equation}
denote the joint probability for an edge to occur in both layers. Unless $q=p_1p_2$, this construction implies that the random variables $A_{ij}^1$ and $A_{ij}^2$ are not independent. 

The parameters $p_1$, $p_2$, and $q$ (which lie in the interval $[0,1]$) fully specify a forward model of networks with correlated ER layers. To generate a network from this model, one considers each node pair $(i,j)$ and, independently of all other node pairs, assigns values to $A_{ij}^1$ and $A_{ij}^2$ according to the following probabilities:
\begin{align} 
	\Pp(A_{ij}^1=1,A_{ij}^2=1) &= q \,, \notag \\
	\Pp(A_{ij}^1=1,A_{ij}^2=0) &= p_1-q \,, \notag \\
	\Pp(A_{ij}^1=0,A_{ij}^2=1) &= p_2-q \,, \notag \\
	\Pp(A_{ij}^1=0,A_{ij}^2=0) &= 1-p_1-p_2+q  \,.   \label{eqn:corrER1-4}
\end{align}
These expressions follow from the laws of probability and from the definitions of $p_1$, $p_2$, and $q$.
For these probabilities to be well-defined, it is both necessary and sufficient that $0 \leq q \leq \min(p_1,p_2)$ and $p_1+p_2 \leq 1 + q$. In Fig.~\ref{fig:correlatedDiagram}, we illustrate the feasible region for $p_1$ and $p_2$, given a value of $q$.

\begin{figure}[ht!]
\centering
\includegraphics[width=0.35\textwidth]{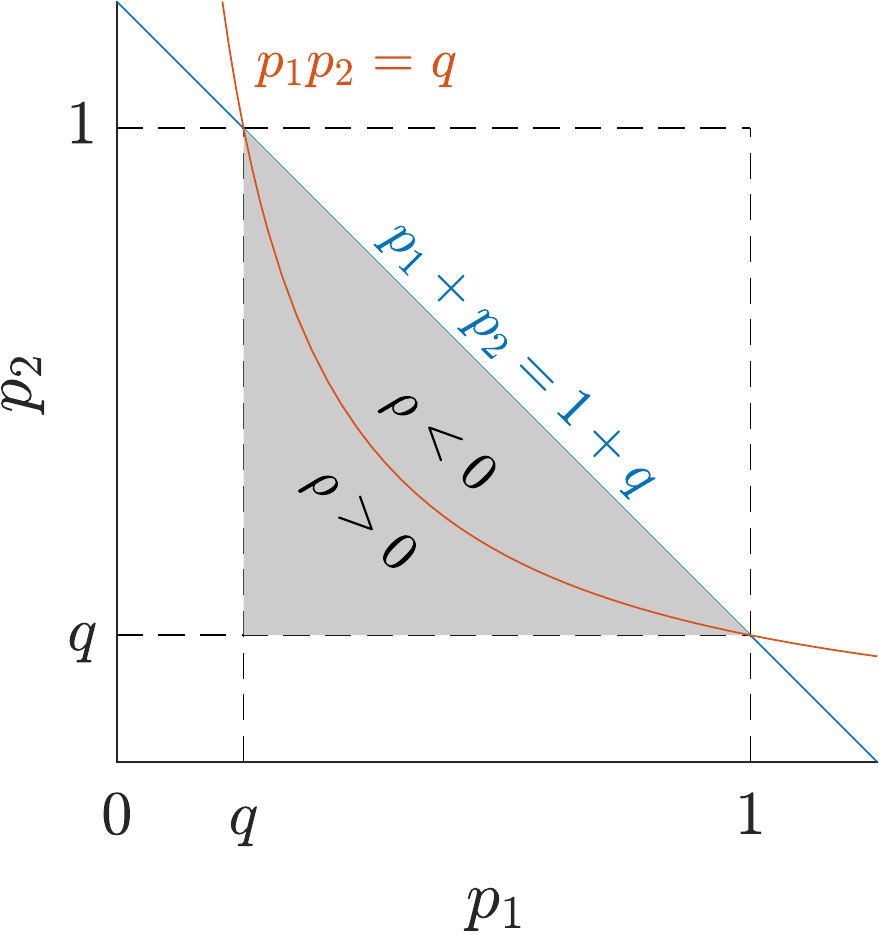}
\caption{Visualization of the feasible region (gray area) for $p_1$ and $p_2$, given a value of $q$. The boundaries of this region are defined by the inequalities $q \leq p_1 \leq 1$, $q \leq p_2 \leq 1$, and $p_1+p_2 \leq 1 + q$. The hyperbola $p_1p_2=q$ specifies the boundary between regimes with a positive layer correlation and regimes with a negative layer correlation.}
\label{fig:correlatedDiagram}
\end{figure}

It is also possible to generate a correlated ER network in a sequential manner. First, one generates the adjacency matrix $\mat{A}^1$ by placing edges with probability $p_1$. One then determines the probabilities of edges in the second layer by conditioning on the first layer:
\begin{align}
	\Pp(A_{ij}^2=1|A_{ij}^1=1) &= \frac{\Pp(A_{ij}^1=1,A_{ij}^2=1)}{\Pp(A_{ij}^1=1)} = \frac{q}{p_1} \,, \notag \\
	\Pp(A_{ij}^2=1|A_{ij}^1=0) &= \frac{\Pp(A_{ij}^1=0,A_{ij}^2=1)}{\Pp(A_{ij}^1=0)} = \frac{p_2-q}{1-p_1} \,, \notag \\
	\Pp(A_{ij}^2=0|A_{ij}^1=1) &= \frac{\Pp(A_{ij}^1=1,A_{ij}^2=0)}{\Pp(A_{ij}^1=1)} = \frac{p_1-q}{p_1} \,, \notag \\
	\Pp(A_{ij}^2=0|A_{ij}^1=0) &= \frac{\Pp(A_{ij}^1=0,A_{ij}^2=0)}{\Pp(A_{ij}^1=0)} \notag \\
		&= \frac{1-p_1-p_2+q}{1-p_1}  \,.   \label{eqn:corrERseq1-4}
\end{align}
With this approach, it is possible to generate networks with arbitrarily many layers by first sampling edges in the first layer, and then sampling edges in each subsequent layer by conditioning on the previous one. 
This kind of process is especially well-suited to temporal networks, in which layers have a natural ordering. For multiplex networks, it is more appropriate to extend Eqns. \eqref{eqn:corrER1-4} to handle more than two layers.

It is also possible to parametrize correlated ER graphs in terms of the marginal Bernoulli probabilities, $p_1$ and $p_2$, and the Pearson correlation
\begin{align}\label{eqn:rhoERBernoulli}
	\rho&=\frac{\E\left[A_{ij}^1A_{ij}^2\right]-\E\left[A_{ij}^1\right]\E\left[A_{ij}^2\right]}{\sigma[A_{ij}^1]\sigma[A_{ij}^2]} \notag \\
		&=\frac{q-p_1p_2}{\sqrt{p_1(1-p_1)p_2(1-p_2)}}\,,
\end{align} 
where $\E[\cdot]$ and $\sigma[\cdot]$, respectively, denote the mean and standard deviation of a random variable. One benefit of using $\rho$, rather than $q$, as the third model parameter is that its value is easier to interpret. A value of $\rho$ that is close to $0$ indicates a weak correlation between layers, whereas values that are close to the extremes of $+1$ and $-1$ indicate a strong positive correlation and a strong negative correlation, respectively.

We can gain further intuition by considering the cases $\rho=0$, $\rho=1$, and $\rho=-1$. First, $\rho=0$ if and only if
$\Pp(A_{ij}^1=1,A_{ij}^2=1)=\Pp(A_{ij}^1=1)\Pp(A_{ij}^2=1)$. 
That is, the correlation is $0$ if and only if edges are generated independently in the two layers with marginal probabilities of $p_1$ and $p_2$.
For $\rho=1$, one can show (see \cite{pamfil2018thesis}) that $p_1=p_2=q$, which corresponds to the two layers having identical network structure. Lastly, for $\rho=-1$, we have $q=0$ and $p_1=1-p_2$ (see \cite{pamfil2018thesis}), and two nodes are adjacent in one layer if and only if they are not adjacent in the other layer.

%%%%%

\subsubsection{Maximum-likelihood parameter estimates}

We now derive ML estimates of the parameters $p_1$, $p_2$, and $q$. Let $\mathcal{E}$ denote the set of node pairs that can form edges. For undirected networks without self-edges, there are $|\mathcal{E}|=N(N-1)/2$ such node pairs to consider, where $N$ is the number of physical nodes. By contrast, $|\mathcal{E}|=N(N-1)$ when generating directed networks without self-edges. With this general notation, all our derivations in Sec.\,\ref{sec:corrModels} are valid for both directed and undirected networks, with or without self-edges. (They are also valid for bipartite networks \cite{pamfil2018thesis}.) We consider each pair of nodes $(i,j) \in \mathcal{E}$ independently when generating edges, so the likelihood of observing adjacency matrices $\mat{A}^1$ and $\mat{A}^2$ is
\begin{widetext}
\begin{align}\label{eqn:corrERlikelihood}
	\Pp(\mat{A}^1,\mat{A}^2|p_1,p_2,q) =\prod_{(i,j) \in \Ee}{q^{A_{ij}^1A_{ij}^2}(p_1-q)^{A_{ij}^1(1-A_{ij}^2)}(p_2-q)^{(1-A_{ij}^1)A_{ij}^2}(1-p_1-p_2+q)^{(1-A_{ij}^1)(1-A_{ij}^2)}}  \,.
\end{align}
\end{widetext}

It is helpful to introduce the following notation:
\begin{align*}
	e_{11} ~&:=~ \lvert\{(i,j) \in \Ee: A_{ij}^1=1, A_{ij}^2=1\}\rvert \,, \\
	e_{10} ~&:=~ \lvert\{(i,j) \in \Ee: A_{ij}^1=1, A_{ij}^2=0\}\rvert \,, \\
	e_{01} ~&:=~ \lvert\{(i,j) \in \Ee: A_{ij}^1=0, A_{ij}^2=1\}\rvert \,, \\
	e_{00} ~&:=~ \lvert\{(i,j) \in \Ee: A_{ij}^1=0, A_{ij}^2=0\}\rvert \,.
\end{align*}
These quantities correspond, respectively, to the number of node pairs that are adjacent in both layers, are adjacent in the first layer but not in the second, are adjacent in the second layer but not in the first, and are not adjacent in either layer.
Using this notation and taking the logarithm of \eqref{eqn:corrERlikelihood}, we arrive at the following expression for the log-likelihood:
\begin{align}\label{eqn:corrERloglikelihood}
	\mathcal{L}&=e_{11}\log q+e_{10}\log(p_1-q)+e_{01}\log(p_2-q) \notag \\
&\qquad+e_{00}\log(1-p_1-p_2+q)\,.
\end{align}

When fitting our model to network data, the quantities $e_{11}$, $e_{10}$, $e_{01}$, $e_{00}$ are all known; and we seek to determine the values of $p_1$, $p_2$, and $q$ that are best explained by the data. To do so, we maximize the log-likelihood \eqref{eqn:corrERloglikelihood} by setting its partial derivatives to $0$ \footnote{The Hessian is negative definite at the critical point that one obtains by setting these derivatives to $0$, so this point is a local maximum (as opposed to a local minimum or saddle point) of the log-likelihood function.}. 
We obtain
\begin{align}
	\widehat{p}_1 &= \frac{e_{11}+e_{10}}{e_{11}+e_{10}+e_{01}+e_{00}}\,, \label{eqn:corrERp1} \\ 
	\widehat{p}_2 &= \frac{e_{11}+e_{01}}{e_{11}+e_{10}+e_{01}+e_{00}}\,, \label{eqn:corrERp2} \\
	\widehat{q}   &= \frac{e_{11}}{e_{11}+e_{10}+e_{01}+e_{00}} \label{eqn:corrERq} \,.
\end{align}
In all three expressions, the denominator is equal to the number of potential edges (i.e., the cardinality of $\Ee$).  Additionally, let $m_1=e_{11}+e_{10}$ and $m_2=e_{11}+e_{01}$ denote the number of observed edges in the first and the second layers, respectively. It follows that the ML estimate $\widehat{p}_1$ is equal to the number of observed edges in layer 1 divided by the number of potential edges, and an analogous relation holds for $\widehat{p}_2$. The estimate $\widehat{q}$ is equal to the number of node pairs that are adjacent in both layers divided by the total number of node pairs. These results match our intuition.

We obtain an estimate of the Pearson correlation $\rho$ between the two layers by substituting the ML estimates $\widehat{p}_1$, $\widehat{p}_2$, and $\widehat{q}$ into Eqn.~\eqref{eqn:rhoERBernoulli} to obtain 
\begin{equation}\label{eqn:rhoERBernoulliMLE}
	\widehat{\rho}=\frac{e_{00}e_{11}-e_{10}e_{01}}{\sqrt{(e_{11}+e_{10})(e_{11}+e_{01})(e_{10}+e_{00})(e_{01}+e_{00})}}\,.
\end{equation}
One can show that maximizing the log-likelihood \eqref{eqn:corrERloglikelihood} with respect to $p_1$, $p_2$, and $\rho$ (rather than with respect to $p_1$, $p_2$, and $q$) gives the same expression for $\wh{\rho}$, confirming that this is indeed an ML estimate of the correlation. Note that $\wh{\rho}$ is not defined when either layer is an empty or a complete graph, as the corresponding Bernoulli random variable has a standard deviation of $0$. 

{In App.~\ref{app:variance}, we calculate the variances that are associated with the ML estimates $\widehat{p}_1$, $\widehat{p}_2$, $\widehat{q}$. We then show using a synthetic example that these scale as $1/N^2$, where we recall that $N$ is the number of physical nodes. These results quantify the uncertainty around the ML estimates for correlated ER models as a function of network size.}

%%%%%

\subsection{Correlated SBMs}\label{subsec:SBM}

One of the ways in which real-world networks differ from ER random graphs is that the former have mesoscale structures, such as communities \cite{fortunato2016community}. We use SBMs to incorporate such structures into our correlated models. 

Let $\vec{g}$ be a vector of block assignments, which we take to be identical for both network layers, and let $K$ denote the number of blocks. As we explained in Sec.~\ref{sec:introduction}, we assume throughout the present paper that we are given $\vec{g}$, and we aim to estimate the remaining model parameters. Following terminology from \cite{aicher2014}, let $\Bb=\{1,\ldots,K\} \times \{1,\ldots,K\}$ be the set of ``edge bundles'' $(r,s)$, each of which is described by its own set of parameters $p_{rs}^1$, $p_{rs}^2$, and $q_{rs}$. The $K \times K$ matrices $\mat{p}^1$, $\mat{p}^2$, and $\mat{q}$ play an analogous role to $p_1$, $p_2$, and $q$ in the ER layers. 

Let $g_i$ denote the block assignment of node $i$. A correlated two-layer SBM is described by the following set of equalities: 
\begin{align*}
	\Pp(A_{ij}^1=1) = p^1_{g_ig_j} \,, \\
	\Pp(A_{ij}^2=1) = p^2_{g_ig_j} \,, \\
	\Pp(A_{ij}^1=1,A_{ij}^2=1) = q_{g_ig_j} \,.
\end{align*} 
Lyzinski et al. proposed this forward model in \cite{lyzinski2015} to study the graph-matching problem. By contrast, we focus on the inverse problem of estimating the parameters $\mat{p}^1$, $\mat{p}^2$, and $\mat{q}$, given some network data.

%%%%

\subsubsection{Maximum-likelihood parameter estimates}

As in Sec.~\ref{subsec:ER}, suppose that we consider each node pair $(i,j)$ independently. The likelihood of observing adjacency matrices $\mat{A}^1$ and $\mat{A}^2$ is then
\begin{widetext}
\begin{align}
	\Pp(\mat{A}^1,\mat{A}^2|\vec{g},\vec{p}^1,\vec{p}^2,\vec{q})= \prod_{(i,j) \in \Ee}&\Bigl[q_{g_ig_j}^{A_{ij}^1A_{ij}^2}(p_{g_ig_j}^1-q_{g_ig_j})^{A_{ij}^1(1-A_{ij}^2)}(p_{g_ig_j}^2-q_{g_ig_j})^{(1-A_{ij}^1)A_{ij}^2} \notag \\
	&\times (1-p_{g_ig_j}^1-p_{g_ig_j}^2+q_{g_ig_j})^{(1-A_{ij}^2)(1-A_{ij}^2)}\Bigr]\,.
\end{align}
\end{widetext}
In this product, each factor depends on $i$ and $j$ only via their block memberships $g_i$ and $g_j$, so we can combine several terms.
First, define
\begin{equation*}
	e_{rs}^{ab}:=\left\lvert\{(i,j) \in \Ee: A_{ij}^1=a,A_{ij}^2=b,g_i=r,g_j=s\}\right\rvert 
\end{equation*}
for $(a,b) \in \{(1,1),(1,0),(0,1),(0,0)\}$, in analogy with $e_{11}$, $e_{10}$, $e_{01}$, and $e_{00}$ from Sec. \ref{subsec:ER}. We can then write the log-likelihood as
\begin{align}\label{eqn:SBMBernoulliLogL}
	\mathcal{L} &= \sum_{(r,s) \in \Bb}\Bigl[e_{rs}^{11}\log q_{rs} \notag \\
		&\qquad\qquad\quad+e_{rs}^{10}\log(p^1_{rs}-q_{rs})+e_{rs}^{01}\log(p^2_{rs}-q_{rs}) \notag \\
		&\qquad\qquad\quad+ e_{rs}^{00}\log(1-p^1_{rs}-p^2_{rs}+q_{rs}) \Bigr]\,.
\end{align}

The advantage of writing the log-likelihood as in \eqref{eqn:SBMBernoulliLogL} is that it clearly separates the contribution from different edge bundles. Using the results for ER layers from Sec.~\ref{subsec:ER}, we immediately obtain (without further calculations) the following ML parameter estimates:
\begin{align}
	\widehat{p}^1_{rs}&=\frac{e_{rs}^{11}+e_{rs}^{10}}{e_{rs}^{11}+e_{rs}^{10}+e_{rs}^{01}+e_{rs}^{00}} =\frac{m_{rs}^1}{e_{rs}}\,, \label{eqn:MLESBMBernoullip1} \\
	\widehat{p}_{rs}^2&=\frac{e_{rs}^{11}+e_{rs}^{01}}{e_{rs}^{11}+e_{rs}^{10}+e_{rs}^{01}+e_{rs}^{00}} =\frac{m_{rs}^2}{e_{rs}}\,, \label{eqn:MLESBMBernoullip2} \\ 
	\widehat{q}_{rs}&=\frac{e_{rs}^{11}}{e_{rs}^{11}+e_{rs}^{10}+e_{rs}^{01}+e_{rs}^{00}} = \frac{e_{rs}^{11}}{e_{rs}}\,, \label{eqn:MLESBMBernoulliq}
\end{align}
where $m_{rs}^1$ and $m_{rs}^2$ denote the number of edges between blocks $r$ and $s$ in layers $1$ and $2$, respectively, and $e_{rs}$ is the number of possible edges between nodes in block $r$ and block $s$. When there is a single edge bundle (i.e., when we do not assume any block structure in a network), the ML estimates \eqref{eqn:MLESBMBernoullip1}--\eqref{eqn:MLESBMBernoulliq} recover those that we obtained for correlated ER networks in Sec.~\ref{subsec:ER}. Each edge bundle also has a corresponding Pearson correlation, whose ML estimate is
\begin{equation}\label{eqn:rhoSBMBernoulli}
	\widehat{\rho}_{rs}=\frac{e_{rs}^{00}e_{rs}^{11}-e_{rs}^{10}e_{rs}^{01}}{\sqrt{(e_{rs}^{11}+e_{rs}^{10})(e_{rs}^{11}+e_{rs}^{01})(e_{rs}^{10}+e_{rs}^{00})(e_{rs}^{01}+e_{rs}^{00})}}\,.
\end{equation}
In applications to temporal consumer--product networks, we find that different edge bundles have vastly different correlation values \cite{pamfil2018thesis}. We anticipate that other empirical multilayer networks have similar properties.

%%%%%

\subsubsection{Effective correlation}

Although having different correlation values for different edge bundles can be useful, it is also helpful to have a single correlation measurement for a given multilayer network. 
For example, one may wish to use such a network diagnostic for one of the purposes that we outlined in Sec.~\ref{sec:introduction}. One way to define an ``effective correlation" is to first sample two node indices, $I$ and $J$, uniformly at random and then compute the Pearson correlation of the random variables $A_{IJ}^1$ and $A_{IJ}^2$. That is,
\begin{equation}\label{eqn:rhoSBMBernoulliEffFormula}
	\mathrm{corr}(A_{IJ}^1,A_{IJ}^2)=\frac{\E[A_{IJ}^1A_{IJ}^2]-E[A_{IJ}^1]\E[A_{IJ}^2]}{\sigma[A_{IJ}^1]\sigma[A_{IJ}^2]}\,,
\end{equation} 
where we use capital letters for the node indices $I$ and $J$ to emphasize that they are random variables. 

We can calculate each term on the right-hand side of \eqref{eqn:rhoSBMBernoulliEffFormula} by conditioning on the block assignments of the randomly chosen nodes $I$ and $J$. First, for $l \in \{1,2\}$, we have
\begin{align}\label{this}
	&\E[A_{IJ}^l]=\Pp(A_{IJ}^l=1) \notag \\
	&\quad=\sum_{(r,s) \in \Bb}{\Pp(A_{IJ}^l=1|g_I=r,g_J=s)\Pp(g_I=r,g_J=s)} \notag \\
	&\quad=\sum_{(r,s) \in \Bb}{\wh{p}_{rs}^1\frac{e_{rs}}{|\Ee|}}
=\sum_{(r,s) \in \Bb}{\frac{m_{rs}^l}{e_{rs}}\frac{e_{rs}}{|\Ee|}}=\frac{m_l}{|\Ee|}\,,
\end{align}
where $m_l$ denotes the number of edges in layer $l$. The expression \eqref{this} is the same as the probability $p_l$ of generating an edge in layer $l$ for the ER case. Because $A_{IJ}^l$ is a Bernoulli random variable (in other words, it can only take the values $1$ or $0$), its standard deviation is
\begin{equation*}
	\sigma[A_{IJ}^l]=\sqrt{\frac{m_l}{|\Ee|}\left(1-\frac{m_l}{|\Ee|}\right)}\,.
\end{equation*}
Lastly, 
\begin{align*}
	\E[A_{IJ}^1A_{IJ}^2]&=\Pp(A_{IJ}^1=1,A_{IJ}^2=1)\\
		&=\sum_{(r,s) \in \Bb}{\wh{q}_{rs}\frac{e_{rs}}{|\Ee|}}
	=\sum_{(r,s) \in \Bb}{\frac{e_{rs}^{11}}{e_{rs}}\frac{e_{rs}}{|\Ee|}}
	=\frac{e_{11}}{|\Ee|}\,.
\end{align*}
The estimated value of the effective correlation is thus
\begin{align}\label{eqn:rhoSBMBernoulliEff}
	\widehat{\rho}&=\mathrm{corr}(A_{IJ}^1,A_{IJ}^2)  \\
		&=\frac{e_{00}e_{11}-e_{10}e_{01}}{\sqrt{(e_{11}+e_{10})(e_{11}+e_{01})(e_{10}+e_{00})(e_{01}+e_{00})}}\,, \notag
\end{align}
which recovers the value in \eqref{eqn:rhoERBernoulliMLE} for ER layers (i.e., without any block structure in the model). 
We stress that there is no reason a priori to expect this outcome. In fact, the analogous result does \emph{not} hold for Poisson models \cite{pamfil2018thesis}. In the present case, the fact that there is such a correspondence between models is convenient for practical reasons, as it implies that one can perform the simpler calculations from Sec.~\ref{subsec:ER} to obtain correlation estimates between network layers, even for networks with nontrivial mesoscale structure.

%%%%%

\subsection{Correlated Degree-Corrected SBMs}\label{subsec:DCSBM}

The models that we have discussed thus far generate networks in which nodes in the same block have the same expected degree. SBMs that make this kind of assumption tend to perform poorly when they are used to infer mesoscale structure in real networks, many of which have highly heterogeneous degree distributions. This observation led to the development of degree-corrected SBMs (DCSBMs) \cite{karrer2011}. We expect that such adjustments can also make a difference when modeling edge correlations, so we now extend the model from Sec.~\ref{subsec:SBM} to incorporate degree correction. 

We continue to work with two-layer networks, which we again specify in terms of two intralayer adjacency matrices, $\mat{A}^1$ and $\mat{A}^2$, with a common block structure that we specify with a vector $\vec{g}$. For each node pair $(i,j) \in \Ee$, we place edges in the two layers according to the probabilities
\begin{align}
	\Pp(A_{ij}^1=1) &= \theta_i^1\theta_j^1p^1_{g_ig_j}\,, \label{eqn:corrDCSBMp1} \\
	\Pp(A_{ij}^2=1) &= \theta_i^2\theta_j^2p^2_{g_ig_j}\,, \label{eqn:corrDCSBMp2} \\
	\Pp(A_{ij}^1=1,A_{ij}^2=1) &= \sqrt{\theta_i^1\theta_j^1\theta_i^2\theta_j^2}q_{g_ig_j}\,. \label{eqn:corrDCSBMq}
\end{align}
We will soon justify the expression in \eqref{eqn:corrDCSBMq}. The quantities $\theta_i^l$ and $\theta_j^l$, with $l \in \{1,2\}$, are the degrees of nodes $i$ and $j$, normalized by the mean degrees. We calculate these quantities directly from an input degree sequence, so they are not model parameters. For undirected and unipartite networks, $\theta_i^l=d_i^l \,/\, \langle d^{\,l}\rangle$, where $i \in \NC$ and $\langle d^{\,l}\rangle$ is the mean degree in layer $l$.
This normalization recovers the model in Sec.~\ref{subsec:SBM} when $\theta_i^l=1$ (i.e., when all nodes have the same degree). {The model parameters $p_{rs}^1$, $p_{rs}^2$, and $q_{rs}$ are now edge ``propensities" that, together with the degrees, control the probabilities of edges in the layers.}

The probabilities in Eqns. \eqref{eqn:corrDCSBMp1}--\eqref{eqn:corrDCSBMp2} ensure that, marginally, $\mat{A}^1$ and $\mat{A}^2$ are generated according to monolayer DCSBMs \cite{karrer2011}. It is not obvious how to model the joint probability $\Pp(A_{ij}^1=1,A_{ij}^2=1)$. In particular, it is not clear how it should depend on the observed degrees of nodes  $i$ and $j$ in layers $1$ and $2$ \footnote{One desiderata of a correlated degree-corrected model is that it satisfies the inequality $\Pp(A^1_{ij} = 1, A^2_{ij} = 1) \leq \Pp(A^1_{ij} = 1)$ for all node pairs $(i,j)$. This requirement rules out some possibilities for $\Pp(A^1_{ij} = 1, A^2_{ij} = 1)$, for instance one where this quantity has no dependence on node degrees.}. 
Part of the complication is that there are four such quantities for each node pair $(i,j)$. The choice from \eqref{eqn:corrDCSBMq} works particularly well when $\rho=1$ and the normalized degree sequences $\vec{\theta}^1$ and $\vec{\theta}^2$ are the same, as it reduces to a single degree-corrected SBM that generates two identical network layers. Another sensible option is to set $\Pp(A_{ij}^1=1,A_{ij}^2=1)=\theta_i^1\theta_j^1\theta_i^2\theta_j^2q_{g_ig_j}$. 
This choice has the nice property that edges in a particular edge bundle $(r,s) \in \Bb$ are independent if and only if $q_{rs}=p_{rs}^1p_{rs}^2$, which matches the independence condition from Sec.~\ref{subsec:SBM} for the setting without degree correction. However, this second model underperforms the one from \eqref{eqn:corrDCSBMp1}--\eqref{eqn:corrDCSBMq} for edge prediction (see Sec.~\ref{sec:linkPrediction} and \cite{pamfil2018thesis}). 
Consequently, for the rest of this paper, we use the model from Eqns. \eqref{eqn:corrDCSBMp1}--\eqref{eqn:corrDCSBMq} as our correlated DCSBM. 

%%%%%

\subsubsection{Maximum-likelihood parameter estimates}

When writing the log-likelihood for correlated DCSBMs, we can ignore any additive terms that only involve known quantities, such as the normalized degrees $\theta_i^l$. We can thus write
\begin{widetext}
\begin{align}\label{eqn:DCSBMlogL} 
	\mathcal{L}=\sum_{(r,s) \in \Bb}\sum_{(i,j) \in \Ee}&\Biggl[A_{ij}^1A_{ij}^2\log q_{rs} + A_{ij}^1(1-A_{ij}^2)\log\left(p_{rs}^1-\sqrt{\frac{\thi^2\thj^2}{\thi^1\thj^1}}q_{rs}\right) +(1-A_{ij}^1)A_{ij}^2\log\left(p_{rs}^2-\sqrt{\frac{\thi^1\thj^1}{\thi^2\thj^2}}q_{rs}\right) \notag \\
	& +(1-A_{ij}^1)(1-A_{ij}^2)\log\left(1-\theta_i^1\theta_j^1p^1_{rs}-\theta_i^2\theta_j^2p^2_{rs}+\sqrt{\theta_i^1\theta_j^1\theta_i^2\theta_j^2}q_{rs}\right)  \Biggr]\delta(g_i,r)\delta(g_j,s) +\const\,.
\end{align}
\end{widetext}
As in Sec.~\ref{subsec:SBM}, we seek to maximize $\mathcal{L}$ with respect to the parameters $p_{rs}^1$, $p_{rs}^2$, and $q_{rs}$ by setting the corresponding derivatives to $0$. However, degree-corrected models have the crucial complication that node pairs $(i,j)$ in the same edge bundle $(r,s)$ are no longer stochastically equivalent (i.e., the corresponding entries of the adjacency matrix are no longer sampled from independent, identically distributed random variables), so their contributions to the log-likelihood are no longer the same in general.
Consequently, the ML equations for correlated DCSBMs involve $\mathcal{O}(N^2/K^2)$ terms, making them more difficult to solve efficiently. 

{In certain cases,} we are able to make some approximations that make these ML equations easier to solve. Recall that $\theta_i^l=1$ if the degree of node $i$ is equal to the mean degree in layer $l$. 
For $(i,j) \in \Ee$, we write
\begin{align*}
	\theta_i^1\theta_j^1 &= 1+\varepsilon_{ij}^1 \,, \\
	\theta_i^2\theta_j^2 &= 1+\varepsilon_{ij}^2 \,.
\end{align*}
{If the degree distribution is narrow, such that all node degrees are close to the mean degree, then} 
$\varepsilon_{ij}^1$ and $\varepsilon_{ij}^2$ are small parameters (which can be either positive or negative). {In this case, a first-order Taylor expansion yields}
\begin{equation}\label{eqn:epsApprox1}
	\sqrt{\theta_i^1\theta_j^1\theta_i^2\theta_j^2} 
		=\sqrt{(1+\varepsilon_{ij}^1)(1+\varepsilon_{ij}^2)} 
		\approx 1 + \frac{\varepsilon_{ij}^1+\varepsilon_{ij}^2}{2}\,.
\end{equation}
We also calculate
\begin{align}\label{eqn:epsApprox2}
	\sqrt{\frac{\thi^1\thj^1}{\thi^2\thj^2}} =\sqrt{\frac{1+\varepsilon_{ij}^1}{1+\varepsilon_{ij}^2}} 
		\approx 1+\frac{\varepsilon_{ij}^1-\varepsilon_{ij}^2}{2}\,
\end{align} 
and 
\begin{equation}\label{eqn:epsApprox3}
	\sqrt{\frac{\thi^2\thj^2}{\thi^1\thj^1}} \approx 1+\frac{\varepsilon_{ij}^2-\varepsilon_{ij}^1}{2}\,. 
\end{equation}

Using the approximations \eqref{eqn:epsApprox1}--\eqref{eqn:epsApprox3}, we expand the first derivatives of $\mathcal{L}$ to first order in $\varepsilon_{ij}^1$ and $\varepsilon_{ij}^2$. (See \cite{pamfil2018thesis} for details.) This calculation yields the following system of equations: 
\begin{widetext}
\begin{align}
	&\frac{e_{rs}^{10}}{p_{rs}^1-q_{rs}}-\frac{e_{rs}^{00}}{1-p_{rs}^1-p_{rs}^2+q_{rs}} +\frac{g_{rs}^{10}}{2}\frac{q_{rs}}{(p_{rs}^1-q_{rs})^2}-f_{rs}^1\frac{1-p_{rs}^2+q_{rs}/2}{(1-p_{rs}^1-p_{rs}^2+q_{rs})^2}-f_{rs}^2\frac{p_{rs}^2-q_{rs}/2}{(1-p_{rs}^1-p_{rs}^2+q_{rs})^2} &= 0 \,, \notag \\
	&\frac{e_{rs}^{01}}{p_{rs}^2-q_{rs}}-\frac{e_{rs}^{00}}{1-p_{rs}^1-p_{rs}^2+q_{rs}} +\frac{g_{rs}^{01}}{2}\frac{q_{rs}}{(p_{rs}^2-q_{rs})^2}-f_{rs}^1\frac{p_{rs}^1-q_{rs}/2}{(1-p_{rs}^1-p_{rs}^2+q_{rs})^2}-f_{rs}^2\frac{1-p_{rs}^1+q_{rs}/2}{(1-p_{rs}^1-p_{rs}^2+q_{rs})^2} &= 0 \,, \notag \\
	&\frac{e_{rs}^{11}}{q_{rs}}-\frac{e_{rs}^{10}}{p_{rs}^1-q_{rs}}-\frac{e_{rs}^{01}}{p_{rs}^2-q_{rs}}+\frac{e_{rs}^{00}}{1-p_{rs}^1-p_{rs}^2+q_{rs}} \notag \\
	&\qquad\qquad\qquad\qquad\qquad  \quad - \frac{g_{rs}^{10}}{2}\frac{p_{rs}^1}{(p_{rs}^1-q_{rs})^2}-\frac{g_{rs}^{01}}{2}\frac{p_{rs}^2}{(p_{rs}^2-q_{rs})^2}
+\frac{1}{2}\frac{f_{rs}^1(1+p_{rs}^1-p_{rs}^2)+f_{rs}^2(1-p_{rs}^1+p_{rs}^2)}{(1-p_{rs}^1-p_{rs}^2+q_{rs})^2} &=0 \label{eqn:BernoulliDC1-3} \,.
\end{align}
\end{widetext}
In these equations, we defined $e_{rs}^{11}$, $e_{rs}^{10}$, $e_{rs}^{01}$, and $e_{rs}^{00}$ as in Sec. \ref{subsec:SBM}. 
Additionally, we set
\begin{align*}
	g_{rs}^{10}&=\sum_{(i,j) \in \Ee}{A_{ij}^1(1-A_{ij}^2)(\varepsilon_{ij}^2-\varepsilon_{ij}^1)}\delta(g_i,r)\delta(g_j,s) \,, \\
	g_{rs}^{01}&=\sum_{(i,j) \in \Ee}{(1-A_{ij}^1)A_{ij}^2(\varepsilon_{ij}^1-\varepsilon_{ij}^2)}\delta(g_i,r)\delta(g_j,s) \,, \\
	f_{rs}^1&=\sum_{(i,j) \in \Ee}{(1-A_{ij}^1)(1-A_{ij}^2)\varepsilon_{ij}^1}\delta(g_i,r)\delta(g_j,s) \,, \\
	f_{rs}^2&=\sum_{(i,j) \in \Ee}{(1-A_{ij}^1)(1-A_{ij}^2)\varepsilon_{ij}^2}\delta(g_i,r)\delta(g_j,s) \,.
\end{align*}
We can efficiently calculate all of these quantities from the matrices $\mat{A}^1$ and $\mat{A}^2$. 

The system of equations \eqref{eqn:BernoulliDC1-3} reduces to the analogous equations for correlated SBMs if we ignore all of the terms that depend on $\varepsilon_{ij}^l$ (i.e., the terms that correspond to perturbations of the degrees from their mean values). The zeroth-order solution that we obtain from ignoring these terms provides a good initialization of a numerical algorithm to solve \eqref{eqn:BernoulliDC1-3} for the parameters $p_{rs}^1$, $p_{rs}^2$, and $q_{rs}$.
In practice, when using correlated DCSBMs for edge prediction (see Sec.~\ref{sec:linkPrediction}), we find that using a first-order approximation to determine $p_{rs}^1$, $p_{rs}^2$, and $q_{rs}$ gives results that are almost identical to those from the zeroth-order approximation [see Eqns. \eqref{eqn:MLESBMBernoullip1}--\eqref{eqn:MLESBMBernoulliq}]. {Consequently, we suggest using these zeroth-order approximation for edge-prediction applications, given that they are straightforward to calculate and have negligible impact on the quality of the results.} {For large networks, we also obtain a noticeable improvement in calculation speed when using these approximations.}

{In App.~\ref{app:approximation}, we compare the parameters that we estimate using the approximate system of equations \eqref{eqn:BernoulliDC1-3} with those from the log-likelihood \eqref{eqn:DCSBMlogL}. As expected, the quality of the approximation depends on the shape of the degree distribution, with larger discrepancies between the two approaches for broader degree distributions.}

%%%%%%

\subsubsection{Correlation values}

For the SBMs without degree correction from Sec.~\ref{subsec:SBM}, node pairs $(i,j)$ from a given edge bundle $(r,s)$ have the same Pearson correlation $\rho_{rs}$. This no longer holds for degree-corrected models. Instead, each node pair $(i,j)$ has its own correlation value
\begin{align}\label{eqn:rhoDC}
	\varrho_{ij} &= \frac{\E[A_{ij}^1A_{ij}^2]-\E[A_{ij}^1]\E[A_{ij}^2]}{\sigma[A_{ij}^1]\sigma[A_{ij}^2]} \\
&=  \frac{\sqrt{\thi^1\thj^1\thi^2\thj^2}q_{rs}-\thi^1\thj^1\thi^2\thj^2p_{rs}^1p_{rs}^2}{\sqrt{\thi^1\thj^1p_{rs}^1(1-\thi^1\thj^1p_{rs}^1)\thi^2\thj^2p_{rs}^2(1-\thi^2\thj^2p_{rs}^2)}}\,. \notag
\end{align}

As in our earlier expansions of the ML equations, we approximate $\varrho_{ij}$ to first order in $\varepsilon_{ij}^1$ and $\varepsilon_{ij}^2$. We obtain
\begin{align}
	\varrho_{ij}
		\approx \rho_{rs}+\rho_{rs}&\Biggl( \frac{\varepsilon_{ij}^1}{2}\frac{p_{rs}^1}{1-p_{rs}^1} +
\frac{\varepsilon_{ij}^2}{2}\frac{p_{rs}^2}{1-p_{rs}^2} \notag \\
		&- \frac{\varepsilon_{ij}^1 +\varepsilon_{ij}^2}{2}\frac{p_{rs}^1p_{rs}^2}{q_{rs}-p_{rs}^1p_{rs}^2} \Biggr) \label{eqn:rhoDCapprox} \,.
\end{align}
Ignoring terms that depend on $\varepsilon_{ij}^1$ and $\varepsilon_{ij}^2$ (i.e., terms that correspond to perturbations of the degrees from their mean values), we obtain $\varrho_{ij} \approx \rho_{rs}$. This approximation works especially well when $p_{rs}^1$ and $p_{rs}^2$ are also small, such that their respective network layers are sparse. 

The case $q_{rs}=p_{rs}^1p_{rs}^2$ requires separate consideration \footnote{Situations when one or more of the denominators in Eqn. \eqref{eqn:rhoDCapprox} are $\mathcal{O}(\varepsilon)$ also require separate derivations. However, we do not treat these exceptional cases in the present paper.} to avoid dividing by $0$.
First-order approximations in $\varepsilon_{ij}^1$ and $\varepsilon_{ij}^2$ for this case give
\begin{align}
	\varrho_{ij} &\approx -\frac{\varepsilon_{ij}^1+\varepsilon_{ij}^2}{2}\sqrt{\frac{p_{rs}^1}{1-p_{rs}^1}}\sqrt{\frac{p_{rs}^2}{1-p_{rs}^2}}\,.
\end{align}
In particular, the zeroth-order solution gives $\varrho \approx 0$, in agreement with the SBM without degree correction from Sec. \ref{subsec:SBM}.

%%%%%

\section{Edge Prediction}\label{sec:linkPrediction}

The aim of edge prediction (also called ``link prediction'') in networks is to infer likely missing edges and/or spurious edges \cite{lu2011}. Edge prediction is useful for filling in incomplete data sets, such as protein-interaction networks (in which edges are often established as a result of costly experiments) \cite{guimera2009} or terrorist-association networks (which are typically constructed based on partial knowledge) \cite{clauset2008}. In the context of bipartite user--item networks, edge-prediction techniques provide candidates for personalized recommendations.

{
One can perform edge prediction in either a supervised or an unsupervised fashion. We briefly discuss each of these types of approaches.}

{
Supervised methods rely on models that learn how a specified set of features relates to the presence or absence of edges. Existing methods that take advantage of multilayer structure for edge prediction typically do so through the specification of multilayer features. These include aggregations of monolayer features \cite{pujari2015}, as well as path-based \cite{jalili2017} and neighborhood-based \cite{hristova2016,mandal2018} features that consider multiple layers. 
Although many of these features depend indirectly on the similarity between different layers, none of these methods quantify the level of correlation or use it for edge prediction. 
}

{
With unsupervised methods, one obtains a ranking of node pairs such that edges are more likely among higher-ranked pairs. Common approaches include ones that are based on probabilistic models and ones that are based on similarity indices (like the Jaccard index or the Adamic--Adar index) \cite{lu2011}. An example of the former for multilayer networks is a method that maps each network layer independently to a hyperbolic space and then uses the hyperbolic distance between nodes in one layer to predict edges in another layer \cite{kleineberg2016}. This work used node-centric notions of correlation and thereby complements the edge-centric perspective of our work. 
}
{Methods that rely on similarity indices include those that first generate latent states (i.e., so-called ``embeddings'') for the nodes and then rank node pairs according to the similarities of these embedding vectors \cite{matsuno2018,pio2020}. Tillman et al. \cite{tillman2020} used layer-level correlations to combine monolayer similarity indices into a single score. There are also approaches that implicitly take advantage of similarities across layers, such as by extracting common higher-order structures (specifically, subgraphs with three or more nodes) and looking for patterns that differ by exactly one edge \cite{coscia2020}.}

{
Variants of SBMs are popular choices for unsupervised edge-prediction methods \cite{guimera2009,guimera2013,clauset2008,peixoto2017bayesian}, including in multilayer settings \cite{debacco2017,valles2016multilayer}. As an example, in a monolayer degree-corrected Bernoulli SBM, the probability that two nodes, $i$ and $j$, are adjacent according to the model is $\Pp(A_{ij}=1) = \theta_i\theta_jp_{g_ig_j}$. The pairs $(i,j)$ for which these probabilities are relatively large but which are not adjacent (i.e., with $A_{ij}=0$) in the actual network of interest produce a list of likely candidates for missing edges. Similarly, pairs $(i,j)$ for which these probabilities are small but which are adjacent (i.e., with $A_{ij}=1$) in the actual network may be spurious edges.
}

%%%%%

\subsection{Edge Prediction Using Correlated Models}

There have been several recent attempts to perform edge prediction in multilayer networks \cite{debacco2017,valles2016multilayer,tarres2018}. All of these methods use multilayer information to infer mesoscale structures in networks, but then they perform edge prediction independently in each layer, conditioned on the inferred mesoscale structure {and any other model parameters}. 
In particular, when using one of these approaches, {observing that two nodes are adjacent} in one layer has no bearing on their probability to be adjacent in another layer. We aim to use our correlated models to overcome this limitation.

As in our prior discussions, consider a network with two layers with intralayer adjacency matrices $\mat{A}^1$ and $\mat{A}^2$, and let $\vec{g}$ be the shared block structure of these layers. Our goal is to predict edges in the second layer, conditioned on the adjacency structure of the first layer. For each node pair $(i,j) \in \Ee$, the key quantities to calculate are the probabilities $\Pp(A_{ij}^2=1|A_{ij}^1=1)$ and $\Pp(A_{ij}^2=1|A_{ij}^1=0)$ for $i$ and $j$ to be adjacent in the second layer, conditioned on them either being adjacent or non-adjacent in the first layer. For example, using the correlated Bernoulli SBM from Sec.~\ref{subsec:SBM} (which has no degree correction), we have
\begin{align}
	\Pp(A_{ij}^2=1|A_{ij}^1=1) &= 
\frac{q_{g_ig_j}}{p_{g_ig_j}^1} \,, \label{eqn:edgePredCond1} \\
	\Pp(A_{ij}^2=1|A_{ij}^1=0) &=
\frac{p_{g_ig_j}^2-q_{g_ig_j}}{1-p^1_{g_ig_j}} \label{eqn:edgePredCond2} \,.
\end{align}
This set of probabilities is the same across all node pairs $(i,j)$ from the same edge bundle $(r,s)$. Now suppose that we have a positive correlation in this edge bundle, so $\rho_{rs}>0$. From the definition of the Pearson correlation, it follows that $q_{rs}>p_{rs}^1p_{rs}^2$. We then find that $\Pp(A_{ij}^2=1|A_{ij}^1=1)>p_{rs}^2$ and $\Pp(A_{ij}^2=1|A_{ij}^1=0)<p_{rs}^2$, whereas using a monolayer SBM would entail that $\Pp(A_{ij}^2=1)=p_{rs}^2$. Therefore, the effect of incorporating correlations into our edge-prediction model when these correlations are positive is (1) to increase the probability that nodes $i$ and $j$ are adjacent in the second layer when the corresponding edge also exists in the first layer and (2) to decrease this probability when the corresponding edge is absent from the first layer. The effect is reversed for negative correlations.

\begin{table*}[htb!]
\caption{Edge-prediction probabilities for various correlated multilayer and monolayer network models.}
\label{table:edgePredictionSummary}
\renewcommand{\arraystretch}{1.3}
\centering
\begin{tabular}{l|c|c}
\hline\hline
Model & $\Pp(A_{ij}^2=1|A_{ij}^1=1)$ & $\Pp(A_{ij}^2=1|A_{ij}^1=0)$ \\ \hline
Corr. ER      & $q/p_1$           & $(p_2-q)/(1-p_1)$ \\
Corr. SBM     & $q_{rs}/p_{rs}^1$ & $(p_{rs}^2-q_{rs})/(1-p_{rs}^1)$ \\
Corr. CM      & $\left.\sqrt{\theta_i^1\theta_j^1\theta_i^2\theta_j^2}q\right/(\theta_i^1\theta_j^1p_1)$ 
  & $(\theta_i^2\theta_j^2p_2-\left.\sqrt{\theta_i^1\theta_j^1\theta_i^2\theta_j^2}q)\right/
  (1-\theta_i^1\theta_j^1p_1)$ \\
Corr. DCSBM   & $\left.\sqrt{\theta_i^1\theta_j^1\theta_i^2\theta_j^2}q_{rs}\right/(\theta_i^1\theta_j^1p_{rs}^1)$ & $(\theta_i^2\theta_j^2p_{rs}^2-\left.\sqrt{\theta_i^1\theta_j^1\theta_i^2\theta_j^2}q_{rs})\right/
  (1-\theta_i^1\theta_j^1p_{rs}^1)$ \\ \hline
  SBM         & \multicolumn{2}{c}{$p_{rs}^2$} \\
  DCSBM       & \multicolumn{2}{c}{$\theta_i^2\theta_j^2p_{rs}^2$} \\
\hline\hline
\end{tabular}
\end{table*}

In Table \ref{table:edgePredictionSummary}, we summarize the two key probabilities \eqref{eqn:edgePredCond1}--\eqref{eqn:edgePredCond2} for four different correlated models, alongside the probabilities $\Pp(A_{ij}^2=1)$ for monolayer SBMs and DCSBMs. We include a correlated ``configuration model" (CM) \cite{fosdick2018}, which is a special case of the degree-corrected SBM from Sec.~\ref{subsec:DCSBM} when there is only one block. (Alternatively, one can think of correlated CMs as extensions of correlated ER models that incorporate degree correction.) We use all of these models for edge prediction in synthetic networks in Sec.~\ref{subsec:edgePredictionSynthetic} and in consumer--product networks in Sec.~\ref{subsec:shopping}. In particular, the two monolayer models are baselines that we hope to outperform using our correlated models.

%%%%%

\subsection{Tests on Synthetic Networks}\label{subsec:edgePredictionSynthetic}

We use $K$-fold \footnote{Note that this $K$ is different from the one that denotes the number of blocks elsewhere in this paper.} cross-validation to assess the performance of the models from Table \ref{table:edgePredictionSummary} on the edge-prediction task. In machine learning, this is an effective way to measure predictive performance \cite{lu2011}. 
After partitioning a given data set into $K$ parts, one 
fits a model to $K-1$ of these subsets and uses it to make predictions on the remaining (i.e., ``holdout") set. 
One uses each subset once as a holdout, so
one does this process $K$ times in total. For our problem, we perform $5$-fold cross-validation (which is a standard choice in the machine-learning literature) by splitting the data in the second layer of a given network into $5$ subsets. Effectively, this consists of hiding $20\%$ of the entries of the adjacency matrix $\mat{A}^2$, such that we do not know whether they are edges or not. 
We then train a model on $100\%$ of the entries of $\mat{A}^1$ and $80\%$ of the entries of $\mat{A}^2$, and we use it to make predictions about the $20\%$ holdout data from $\mat{A}^2$. We do this $5$ times to cover each choice of holdout data.

A common way to assess the performance of a binary classification model (i.e., a model that assigns one of two possible values to test data) is by using a receiver operating characteristic (ROC) curve. An ROC curve plots the true-positive rate (TPR) of a classifier versus the false-positive rate (FPR) for various choices of a threshold. Many models --- including those that are used for edge prediction in networks --- make probabilistic predictions, so specifying a threshold is necessary to convert these into binary predictions. Lowering the threshold increases both the TPR and the FPR. A model has predictive power if the former grows faster than the latter, such that the entire ROC curve lies above the diagonal line $\mbox{TPR}=\mbox{FPR}$, which gives the performance of a random classifier. As a single summary measure of a model's predictive performance, it is common to report the area under an ROC curve (AUC). Larger AUC values are better, with a value of $\mathrm{AUC}=0.5$ indicating equal success as random guessing and $\mathrm{AUC}=1$ corresponding to perfect prediction. Even in the latter case, one still needs to determine a choice of threshold that completely separates true positives from false positives.

{The AUC is not the only possible quantity to assess edge-prediction performance, although it is very common \cite{debacco2017, ghasemian2019}. In many networks, the number of edges is much smaller than the number of non-edges (i.e., pairs of nodes that are not adjacent). In situations with such an imbalance, the area under the precision--recall (PR) curve is more sensitive than the AUC to variations in model performance. Nevertheless, we use the AUC because it has an intuitive interpretation (specifically, as the probability that the underlying model ranks a true positive above a true negative) that allows us to establish the result in App.~\ref{app:AUCproof}.} {Additionally, the conclusions that we draw in Sec.~\ref{subsec:shopping} about the performance of different models do not change significantly if we use PR curves rather than ROC curves.}

\begin{figure*}[bht]
\centering
\subfloat[Models without mesoscale structure]{\includegraphics[width=0.35\textwidth]{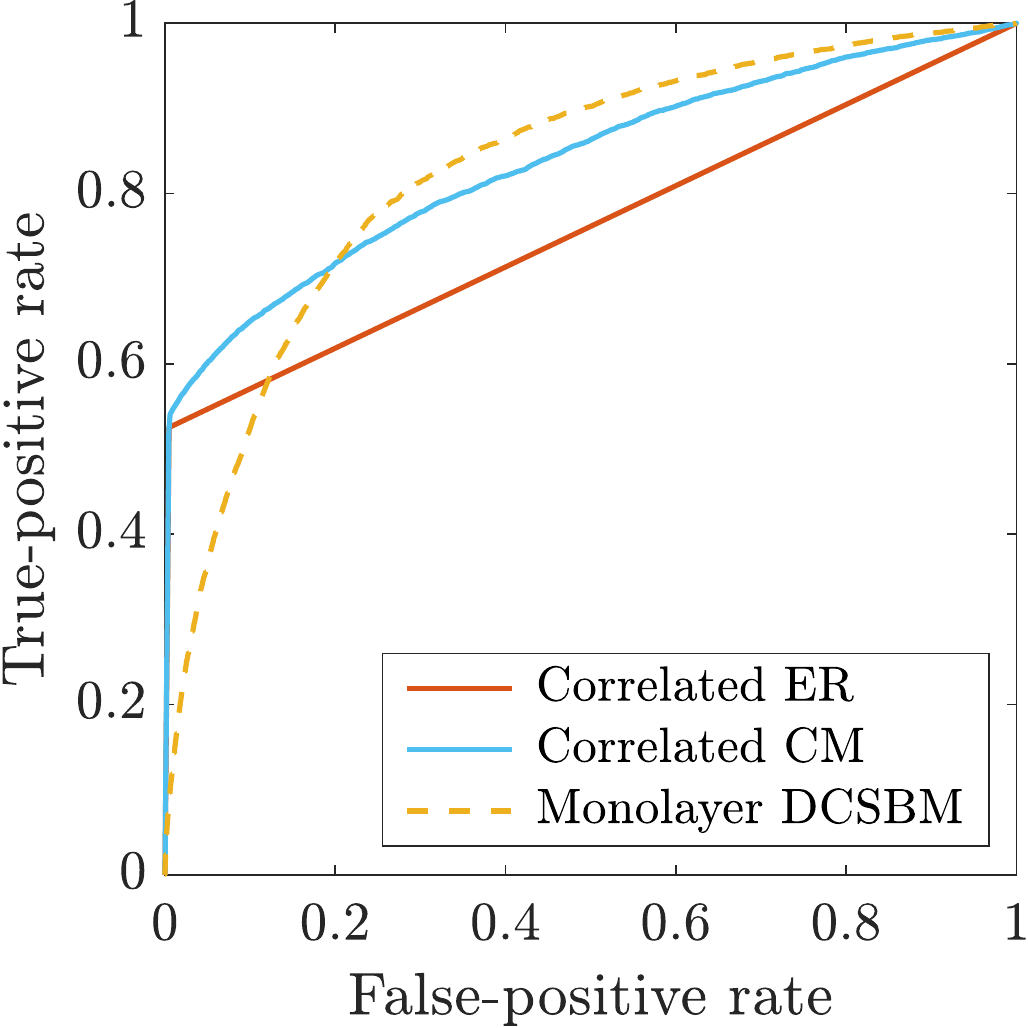}} 
\hspace{6em}
\subfloat[Models with mesoscale structure]{\includegraphics[width=0.35\textwidth]{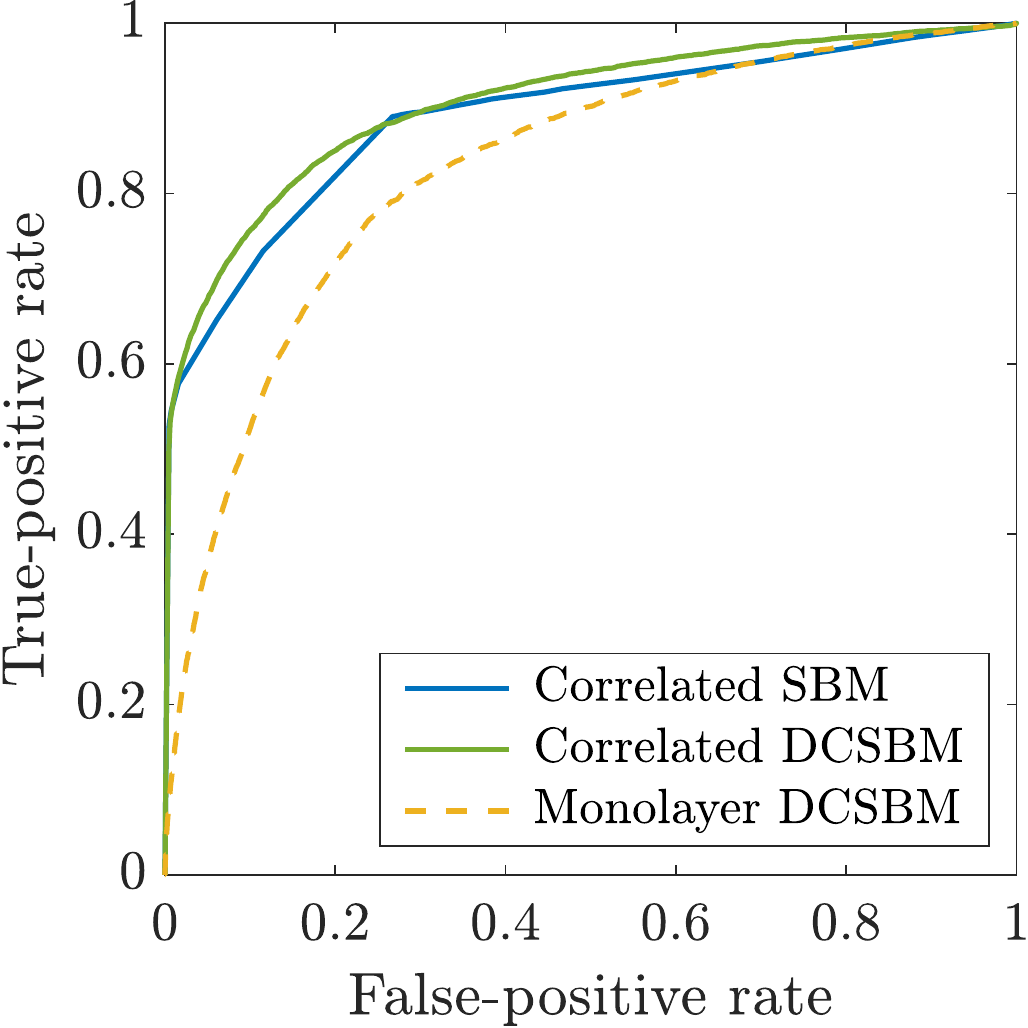}} 
\caption{ROC curves from a $5$-fold cross-validation for a network that we sample from the {\sc CorrDCSBM} benchmark with community-mixing parameter $\mu=0.3$ and correlation $\rho=0.5$. 
(a) Correlated models that do not incorporate any mesoscale structure compared to a monolayer DCSBM baseline. The baseline gives $\mathrm{AUC} \approx 0.83$, whereas the AUC values for the two correlated models are approximately $0.76$ (correlated ER) and $0.83$ (correlated CM). (b) Correlated models that incorporate mesoscale structure compared to the same monolayer DCSBM baseline. The baseline again gives $\mathrm{AUC} \approx 0.83$, and the AUC values for the two correlated models are approximately $0.89$ (correlated SBM) and $0.91$ (correlated DCSBM).
}
\label{fig:edgePredROC}
\end{figure*}

\begin{figure*}[bth!]
\centering
\subfloat[$\mu=0.3$, $\rho \leq 0$]{\includegraphics[width=0.48\textwidth]{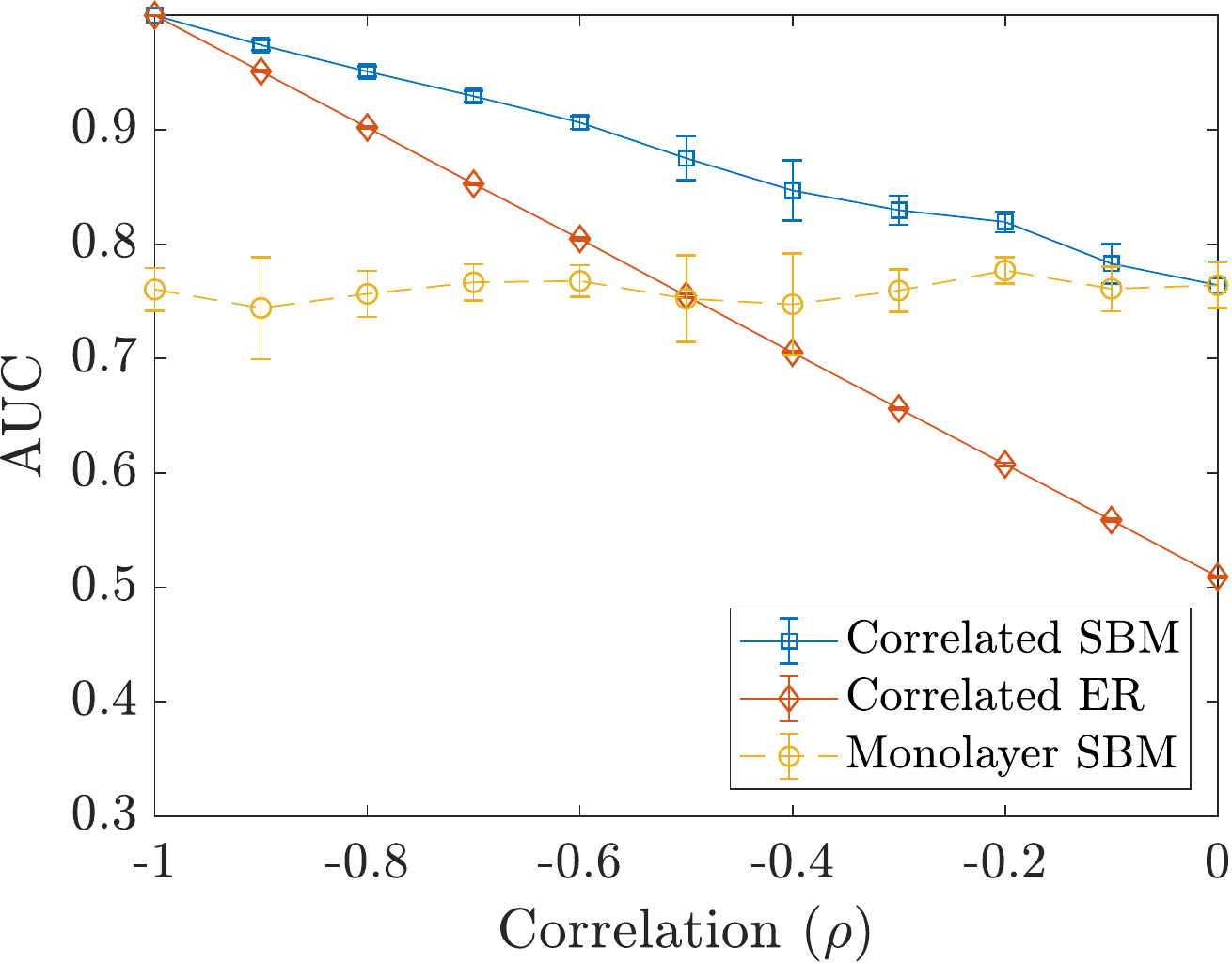}} 
\hfill
\subfloat[$\mu=0.3$, $\rho \geq 0$]{\includegraphics[width=0.48\textwidth]{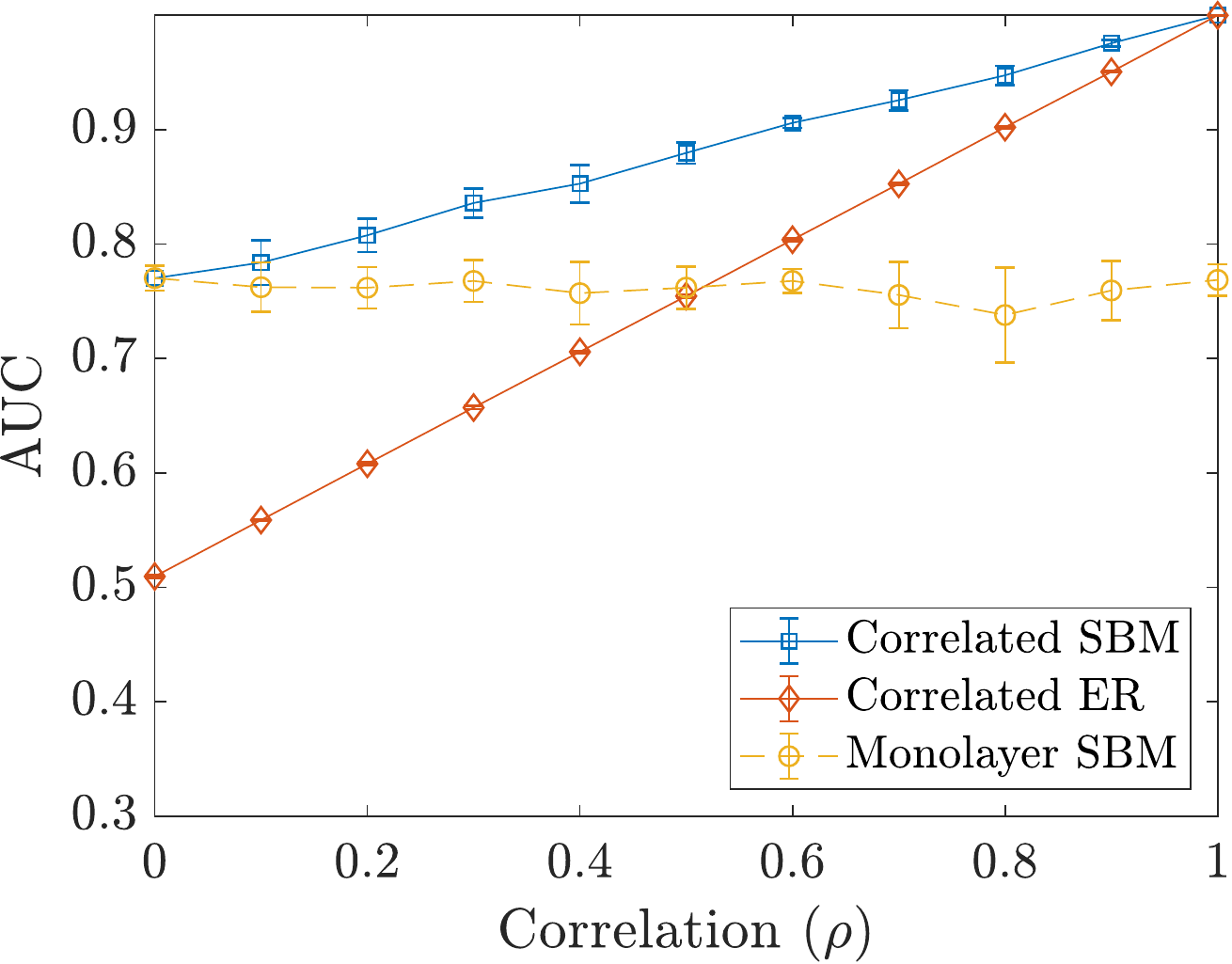}} \\
\subfloat[$\mu=0.8$, $\rho \leq 0$]{\includegraphics[width=0.48\textwidth]{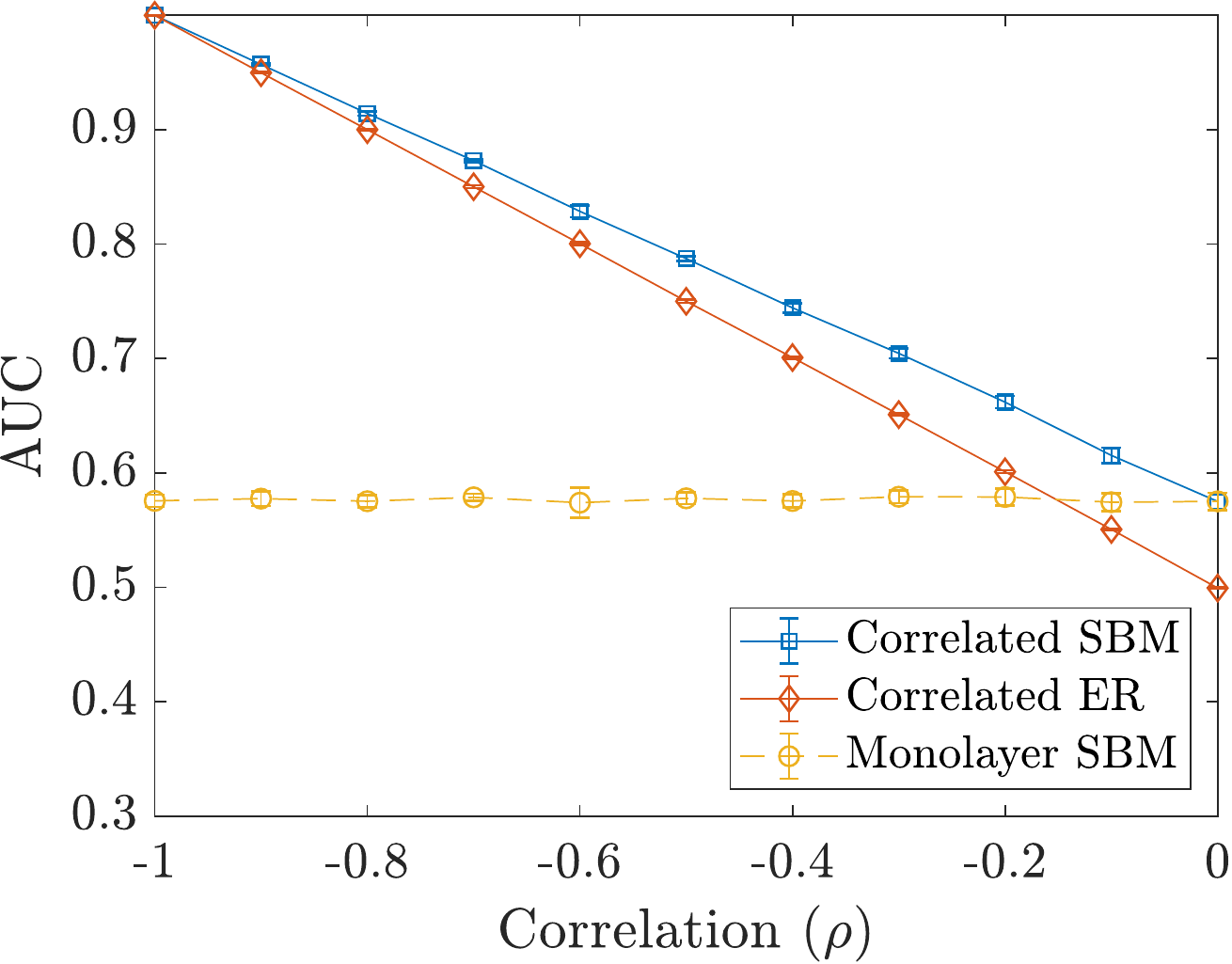}} 
\hfill
\subfloat[$\mu=0.8$, $\rho \geq 0$]{\includegraphics[width=0.48\textwidth]{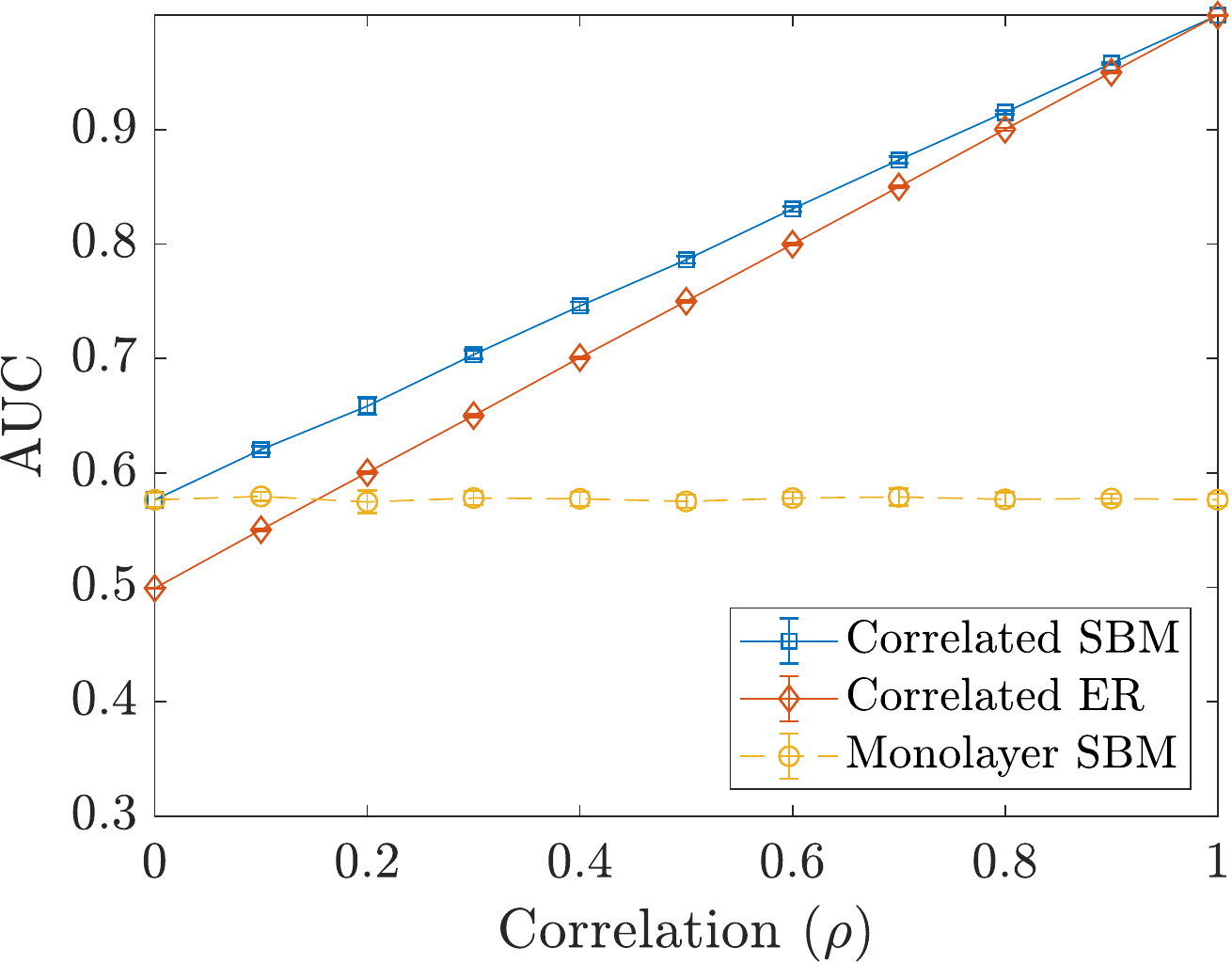}} 
\caption{Edge-prediction results on synthetic networks from the {\sc CorrSBM} benchmark with (left) $\rho \leq 0$ and (right) $\rho \geq 0$ using two choices of the community-mixing parameter $\mu$. We use $\mu=0.3$ in the top row and $\mu=0.8$ in the bottom row. In all plots, along the horizontal axis, we vary the correlation $\rho$ that we use to generate network instances. On the vertical axis, we indicate the AUC for $5$-fold cross-validation using a monolayer SBM (dashed curves) and correlated SBM and ER models (solid curves). Each data point is a mean across $10$ trials, and the error bars correspond to one standard deviation from that mean. As expected, the AUC does not change with $\rho$ for the monolayer model, but it increases with $|\rho|$ for the two correlated models. For progressively larger $\mu$, for which the sampled networks have progressively weaker mesoscale structure, there is a smaller performance gap between correlated ER models and correlated SBMs.
  }
\label{fig:edgePred}
\end{figure*}

We now describe how to generate synthetic network benchmarks that are suitable for testing the models in Table \ref{table:edgePredictionSummary}. We construct these networks so that they have two tunable parameters: the Pearson correlation $\rho \in [-1,1]$ and a community-mixing parameter $\mu \in [0,1]$ that controls the strength of the planted mesoscale structure. See Bazzi et al. \cite{bazzi2016generative} for more details about the definition of $\mu$. One can also explicitly control the degree distribution, such as by including a parameter $\eta_k$ for the slope of a truncated power law (e.g., as used in \cite{bazzi2016generative} to sample a degree sequence in each layer). For the experiments in this section, we fix $\eta_k=-2$ and use a minimum degree of $k_\mathrm{min}=10$ and a maximum degree of $k_\mathrm{max}=50$. It would be interesting to explore the performance gap between degree-corrected models and models without degree correction as one varies $\eta_k$, $k_\mathrm{min}$, and $k_\mathrm{max}$, although we do not do so in the present paper. Lastly, for our numerical experiments in this section, we use networks with $N=2000$ nodes in each layer and $n_c=5$ communities, with community sizes sampled from a flat Dirichlet distribution (i.e., one with $\theta=1$ in the notation of \cite{bazzi2016generative}).

\begin{figure*}[tb!]
\centering
\subfloat[$\mu=0.3$]{\includegraphics[width=0.48\textwidth]{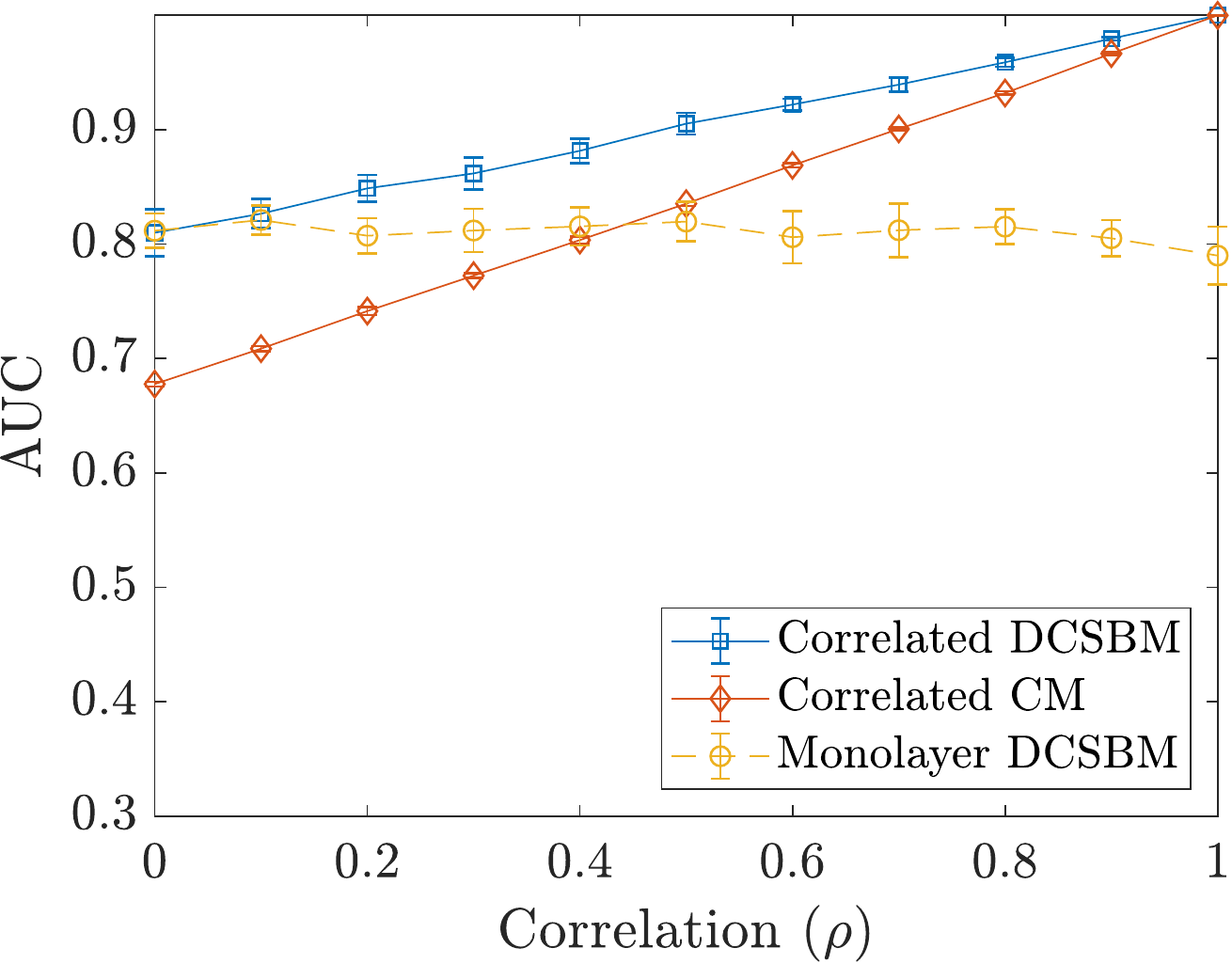}} 
\hfill
\subfloat[$\mu=0.8$]{\includegraphics[width=0.48\textwidth]{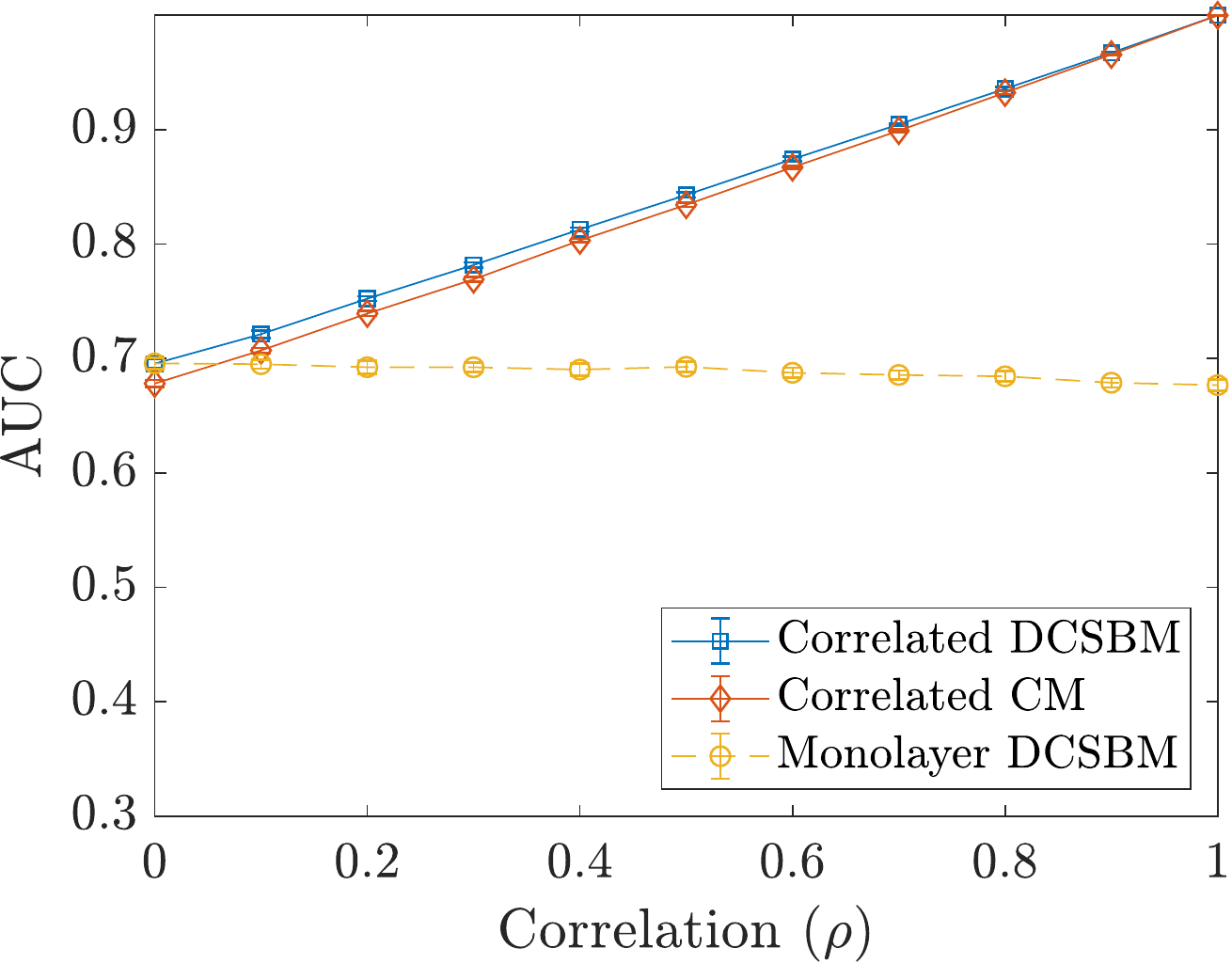}} 
\caption{Edge-prediction results on synthetic networks from the {\sc CorrDCSBM} benchmark with correlation $\rho \geq 0$ and community-mixing parameters of  (a) $\mu=0.3$ and (b) $\mu=0.8$. In both panels, along the horizontal axis, we vary the correlation $\rho$ that we use to generate network instances. On the vertical axis, we indicate the AUC for $5$-fold cross-validation using a monolayer DCSBM (dashed curves) and the correlated DCSBM and CM (solid curves).
Each data point is a mean across $10$ trials, and the error bars correspond to one standard deviation from that mean. As expected, the AUC is roughly independent of $\rho$ for the monolayer model, but it increases with $\rho$ for the two correlated models. As we increase $\mu$, such that the sampled networks have progressively weaker mesoscale structure, we observe a substantial narrowing of the performance gap between the correlated CMs and the correlated DCSBMs.}
\label{fig:edgePredDC}
\end{figure*}

We examine two versions, which we call {\sc CorrSBM} and {\sc CorrDCSBM}, of a correlated benchmark that is parametrized by the correlation $\rho$ and the community-mixing parameter $\mu$. For both versions of the benchmark, we generate the (undirected and unipartite) adjacency matrix $\mat{A}^1$ of the first layer in the same way. Specifically, given $\mu$ and degree-distribution parameters $\eta_k$, $k_\mathrm{min}$, and $k_\mathrm{max}$, we use the code from \cite{benchmark} to generate $\mat{A}^1$ and its associated block structure $\vec{g}$. We fit a monolayer model --- either an SBM or a DCSBM, depending on the selected version of the benchmark ---  to $\mat{A}^1$ to obtain the marginal edge propensities $\mat{p}^1$ for the first layer. We then choose $\mat{p}^2$ in one of two ways. For $\rho \in [0,1]$, we set $\mat{p}^2=\mat{p}^1$, which ensures that we can generate networks with correlations that cover the entire range from $0$ to $1$. For $\rho \in [-1,0]$, we set $\mat{p}^2=\mat{\mathbbm{1}}-\mat{p}^1$, where $\mat{\mathbbm{1}}$ is a
matrix with all entries equal to $1$; this ensures that we can generate networks with correlations that cover the entire range from $-1$ to $0$.
Given $\mat{p}^1$, $\mat{p}^2$, and $\rho$, we then determine $\mat{q}$ using either the correlated SBM of Sec.~\ref{subsec:SBM} or the correlated DCSBM of Sec.~\ref{subsec:DCSBM}. For the {\sc CorrDCSBM} benchmark with $\rho \geq 0$, we set the normalized degrees $\theta_i^2$ to be equal to the corresponding quantities $\theta_i^1$ from the first layer. Again, this choice ensures that we can generate networks all the way to $\rho=1$. (We also implemented a version of this benchmark that samples degrees independently in the second layer, and we found qualitatively similar results.) The final step consists of generating $\mat{A}^2$ given $\mat{A}^1$, the propensities $\mat{p}^2$ and $\mat{q}$, and (for the {\sc CorrDCSBM} benchmark only) the normalized degree sequences $\vec{\theta}^1$ and $\vec{\theta}^2$ for both layers. To perform this step, we first compute edge probabilities using either of the correlated models from Secs.~\ref{subsec:SBM} and \ref{subsec:DCSBM}, and we then generate edges independently according to these probabilities.

We now present our results for the two variants of the benchmark. In Fig.~\ref{fig:edgePredROC}, we show sample ROC curves for one network that we create using the {\sc CorrDCSBM} benchmark with $\mu=0.3$ and $\rho=0.5$. We compare the performance of our correlated models with a monolayer DCSBM baseline, which performs edge prediction using only information from the second network layer. Two of the correlated models outperform this baseline, and the correlated CM performs comparably well (i.e., it has a similar AUC).

In Fig.~\ref{fig:edgePred}, we show results for the {\sc CorrSBM} benchmark for two choices of the community-mixing parameter $\mu$ and several values (both positive and negative) of the Pearson correlation $\rho$. As expected, the AUC values for monolayer SBMs are independent of $\rho$, whereas the predictive performance of correlated ER models and correlated SBMs improves as we increase $|\rho|$. In particular, when $|\rho|=1$, the two correlated models make perfect predictions. When $\rho=0$, the performance of the correlated ER model is indistinguishable from chance (because $\mathrm{AUC} \approx 0.5$), whereas correlated SBMs {have identical performance to} monolayer SBMs. The gap between the two correlated models is smaller for $\mu=0.8$ than for $\mu=0.3$, because the underlying block structure is weaker in the former case than in the latter. The AUC of the monolayer baseline is also smaller for $\mu=0.8$ than for $\mu=0.3$.

One striking feature in Fig.~\ref{fig:edgePred} is that all curves are approximately straight lines (to within sampling error). It makes sense that the performance of monolayer SBMs does not vary with $\rho$, as these models do not use any information from the other layer, but the linear dependence on $\rho$ of the other two curves is less intuitive.
For the correlated ER model, we can establish rigorously (see App.~\ref{app:AUCproof}) that the AUC is approximately equal to $(1+|\rho|)/2$ when $p_1 \approx p_2$ or when $p_1 \approx 1-p_2$. Given that the correlated SBM curves from Fig.~\ref{fig:edgePred} also exhibit a linear dependence on $\rho$, we believe that it is possible to establish similar results for correlated models that incorporate mesoscale structure. These results have practical importance, as they allow one to quickly estimate the additional benefits of using
correlated models instead of monolayer SBMs for edge prediction.  

In Fig.~\ref{fig:edgePredDC}, we show results for the {\sc CorrDCSBM} benchmark for two choices of the community-mixing parameter $\mu$ and nonnegative \footnote{The case $\rho \leq 0$ requires a different way of sampling normalized degrees $\theta_i^2$ of nodes in the second layer to be able to take $\rho$ all the way to $\rho=-1$. More precisely, when $\rho$ is close to $-1$, the second layer is dense if the first layer is sparse (and vice versa), so setting $\vec{\theta}^1=\vec{\theta}^2$ is inadequate.} values of the Pearson correlation $\rho$. As expected, when $\rho=0$, correlated DCSBMs perform similarly to monolayer DCSBMs. As in Fig.~\ref{fig:edgePred}, the performance of the monolayer model is roughly independent of $\rho$, whereas the two correlated models perform better as $\rho$ increases. The gap between the two correlated models narrows substantially as one increases $\mu$ from $0.3$ to $0.8$. 

%%%%%

\section{Applications}\label{sec:applications}

We discuss two applications of correlated multilayer-network models to the analysis of empirical networks. In Sec.~\ref{subsec:layerCorr}, we report pairwise layer correlations for several multiplex networks of different sizes. In Sec.~\ref{subsec:shopping}, we consider a temporal bipartite network of customers and products. Using an approach similar to that from Sec.~\ref{sec:linkPrediction}, we demonstrate that correlated multilayer models have a better edge-prediction performance than monolayer baselines.

%%%%%%

\subsection{Layer Correlations in Empirical Networks}\label{subsec:layerCorr}

We now calculate pairwise layer correlations using the formula \eqref{eqn:rhoSBMBernoulliEff}. Recall that this expression gives the effective correlation between two layers, assuming that they have identical block structures (although their edge-propensity parameters can be different). Crucially, this calculation does \textit{not} require that one first determines the underlying block structure. In fact, as we demonstrated in Sec.~\ref{subsec:SBM}, the effective correlation of a correlated SBM recovers the correlation of a correlated ER graph, which is straightforward to compute. Accounting for node degrees, as we did in Sec.~\ref{subsec:DCSBM} for correlated DCSBMs, significantly increases the complexity of such a calculation. Additionally, as we showed in Sec.~\ref{subsec:DCSBM}, correlations using a degree-corrected model are rather similar to those that one obtains without degree correction.

In Table~\ref{tab:empiricalCorr}, we report the mean pairwise layer correlation for $9$ multiplex networks. (See App.~\ref{app:dataSets} for descriptions of these data sets.) To provide additional insight into these networks, we also report the two layers with the largest effective correlations. 

We make a few observations about some of the results in Table~\ref{tab:empiricalCorr}. For the \textit{C. elegans} connectome, the layers that correspond to two types of chemical synapses are highly correlated with each other, and their correlation to the layer of electrical synapses is comparatively lower. For the European Union (EU) air transportation network, the two most correlated layers are those that correspond to Scandinavian Airlines and Norwegian Air Shuttle flights; this is consistent with the findings in \cite{kao2017} (which were based on a different method for quantifying layer similarity). For the network of arXiv collaborations between network scientists, the two most similar categories are ``physics.data-an'' (which stands for ``Data Analysis, Statistics and Probability'') and ``cs.SI'' (which stands for ``Social and Information Networks''). One hypothesis is that these two labels are often used together in papers; such common usage results in a large edge overlap between the corresponding layers and hence in a large correlation value.

\begin{table*}[htb!]
\centering
\caption{Pairwise layer correlations in several multiplex networks.
}
\begin{tabular}{l|l|c|c|l}
\hline\hline
\multirow{2}{*}{Domain} & \multirow{2}{*}{Network} & Number & Mean & \multirow{2}{*}{Largest correlation (corresponding layers)} \\ 
& & of layers & correlation & \\ \hline
\multirow{3}{*}{Social} & 
CS Aarhus \cite{magnani2013} & 5 & $0.27$ & $0.45$ (``work'' and ``lunch'' layers) \\
& Lazega law firm \cite{lazega2001} & 3 & $0.39$ & $0.48$ (``advice'' and ``co-work'' layers) \\
& YouTube \cite{tang2009} & $5$ & $0.12$ & $0.20$ (``shared subscriptions'' and ``shared subscribers'') \\ \hline
\multirow{3}{*}{Biological}
& \textit{C. elegans} connectome \cite{chen2006} & 3 & $0.47$ & $0.85$ (``MonoSyn'' and ``PolySyn'' layers) \\
& \textit{P. falciparum} genes \cite{larremore2013} & 9 & $0.08$ & $0.25$ (``HVR7'' and ``HVR9'' layers) \\
& \textit{Homo sapiens} proteins \cite{stark2006} & 7 & $0.04$ & $0.29$ (``direct interaction'' and ``physical association'') \\ \hline
\multirow{3}{*}{Other}
& FAO international trade \cite{dedomenico2015structural} & 364 & $0.13$ & $0.74$ (``Pastry'' and ``Sugar confectionery'') \\
& EU air transportation \cite{cardillo2013} & 37 & $0.03$ & $0.39$ (``Scandinavian Airlines'' and ``Norwegian Air Shuttle'') \\
& ArXiv collaborations \cite{dedomenico2015} & 13 & $0.07$ & $0.73$ (``physics.data-an'' and ``cs.SI'') \\
\hline\hline
\end{tabular}
\label{tab:empiricalCorr}
\end{table*}

To quantify edge correlations at a more granular level, one can first infer block assignments $\vec{g}$ and then calculate correlations $\rho_{rs}$ between all block pairs $(r,s)$. One possible finding from such a calculation may be that correlations between Scandinavian Airlines and Norwegian Air Shuttle routes are significantly larger in certain geographical regions than in others. 

%%%%%%

\subsection{Edge Prediction in Shopping Networks}\label{subsec:shopping}

The data-science company dunnhumby gave us access to ``pseudonymized'' transaction data from stores of a major grocery retailer in the United Kingdom. The data were pseudonymized by replacing personally identifiable
information with numerical IDs, rendering it impossible to identify individual shoppers. For our analysis, we aggregate transactions over fixed time windows to construct bipartite networks of customers and products. We refer to these structures as ``shopping networks''. Because some purchases occur in higher volumes than others, it is useful to incorporate edge weights. Given a customer $i$ and a product $j$, the \textit{item-penetration} weight is equal to the fraction of all of the items purchased by customer $i$ that are product $j$. The \textit{basket-penetration} weight is equal to the fraction of all baskets (i.e., distinct shopping trips) of customer $i$ that include product $j$. See the doctoral dissertation \cite{pamfil2018thesis} for more details about these weighting schemes.

\begin{table}[htb!]
\caption{Predictive performance of different models on the shopping data set, as measured by the AUC.}
%\smaller
\label{table:AUCshopping}
\centering
\begin{tabular}{l|c|c}
\hline\hline
\multirow{2}{*}{Model} & AUC & AUC  \\ 
      & ({\sc ShoppingMod}) & ({\sc ShoppingSBM}) \\ \hline
Monolayer SBM   & 0.549 & 0.633 \\
Correlated ER      & 0.724 & 0.743 \\
Correlated SBM     & 0.742 & 0.793 \\ \hline
Monolayer DCSBM & 0.725 & 0.797 \\
Correlated CM      & 0.817 & 0.870 \\
Correlated DCSBM   & 0.818 & 0.875 \\
\hline\hline
\end{tabular}
\end{table}

We now apply the edge-prediction methodology from Sec.~\ref{sec:linkPrediction} to temporal shopping networks, in which edges and edge weights can change from changes in shopping behavior, with a fixed set of customers and a fixed set of products. We construct networks with two layers, which cover the three-month time periods of March--May 2013 and June--August 2013, respectively. {Using the same underlying transaction data, we construct two networks for which we determine the vector $\vec{g}$ of block assignments (the same one for both layers \footnote{{The assumption that communities are identical across layers is a reasonable one for this data set. When performing community detection without this restriction, we find that more than 90\% of nodes stay in the same community across temporal layers. See \cite{pamfil2018thesis} for details.}}) in different ways.} For the first network (which we call {\sc ShoppingMod}), we use basket-penetration weights for the edges and apply multilayer modularity maximization \cite{mucha2010,genlouvain} to the weighted network to determine community assignments $\vec{g}$ \footnote{We use {\sc GenLouvain} \cite{genlouvain} to perform multilayer modularity maximization. This algorithm requires the specification of (at least) two parameters \cite{pamfil2018,mucha2010}, which we set to $\gamma=1.2$ and $\omega=5.0$. {The large value of $\omega$ ensures that community assignments are identical across layers.}}. For the second network (which we call {\sc ShoppingSBM}), we initially calculate item-penetration weights, and we then apply a threshold to remove edges whose weight is below the median weight (i.e., approximately $50\%$ of the edges).
We fit a degree-corrected SBM to the resulting unweighted network using the belief-propagation algorithm from \cite{newman2016annotated}. 
We expect better edge-prediction performance for the second network, because we detect its block structure using an SBM (as opposed to using modularity maximization, which is more restrictive).

\begin{figure*}[htb!]
\centering
\subfloat[{\sc ShoppingMod}, no degree correction]{\hspace{1em}\includegraphics[width=0.4\textwidth]{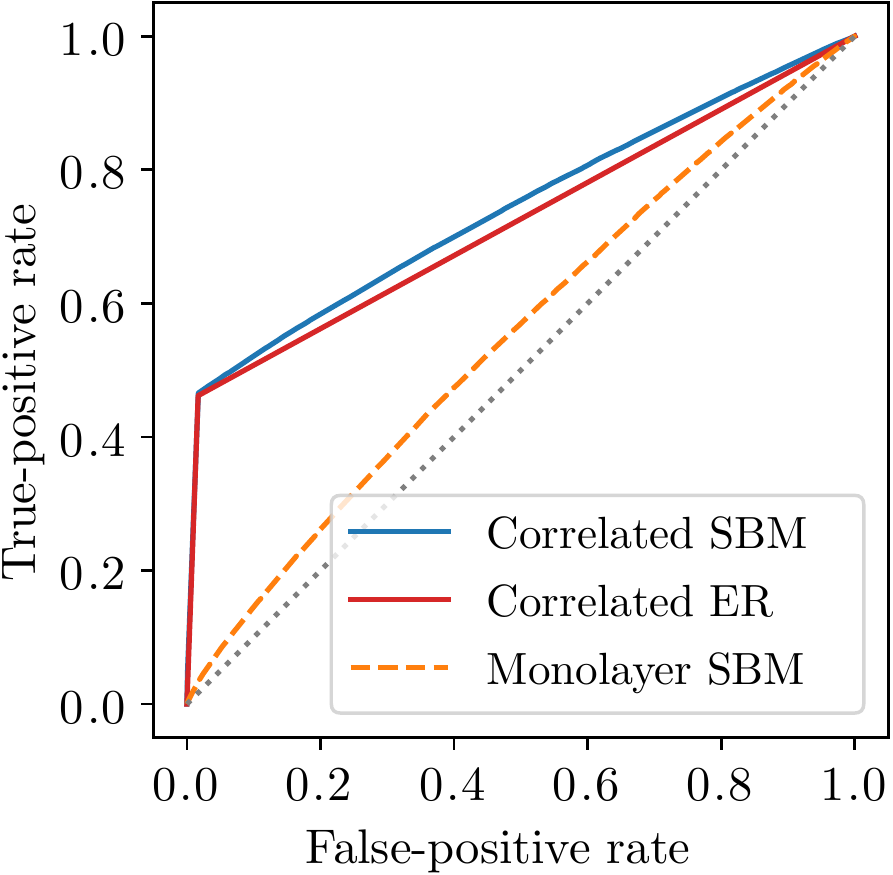}} 
\hspace{4em}
\subfloat[{\sc ShoppingMod}, degree-corrected models]{\includegraphics[width=0.4\textwidth]{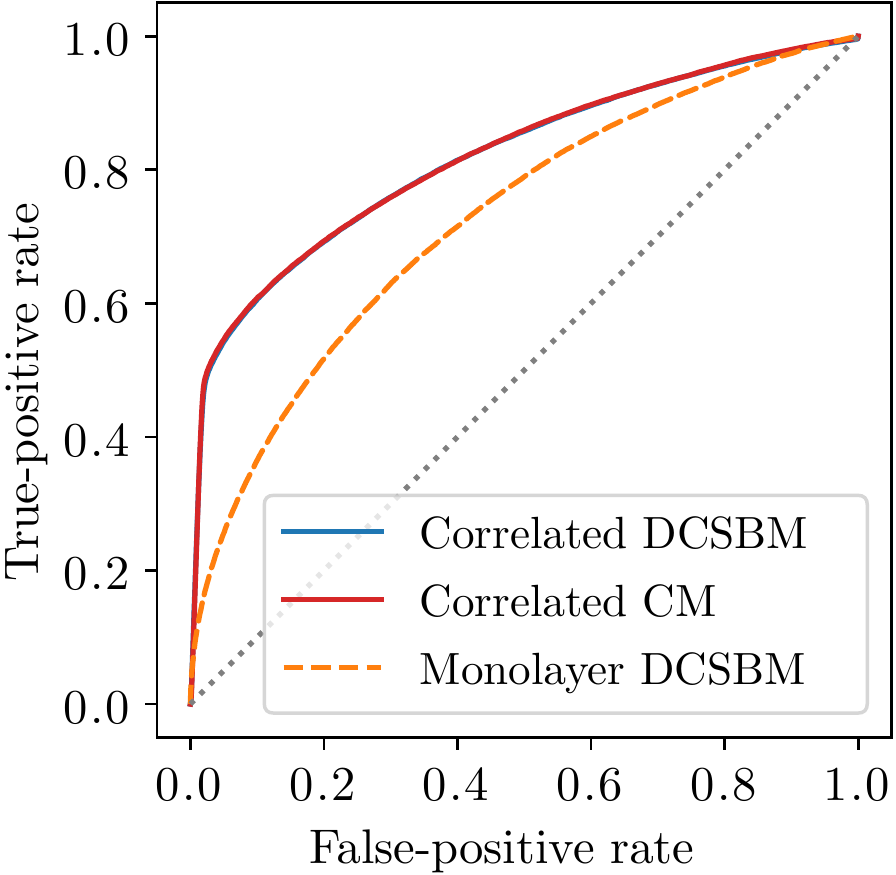}} 
\\
\subfloat[{\sc ShoppingSBM}, no degree correction]{\hspace{1em}\includegraphics[width=0.4\textwidth]{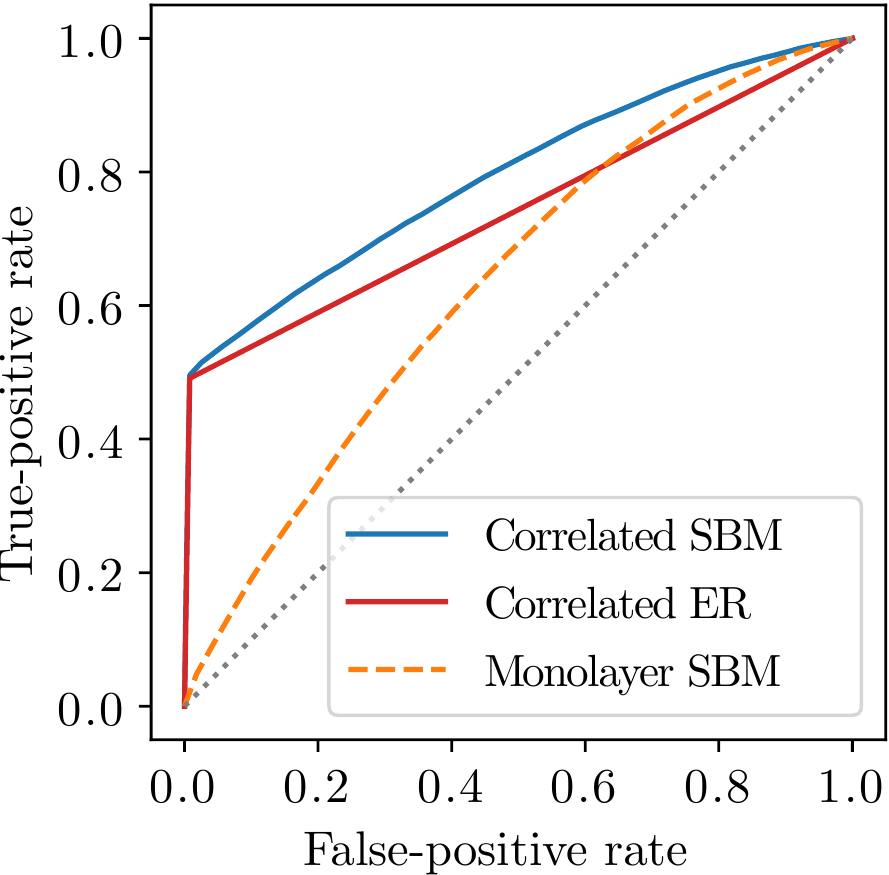}} 
\hspace{4em}
\subfloat[{\sc ShoppingSBM}, degree-corrected models]{\includegraphics[width=0.4\textwidth]{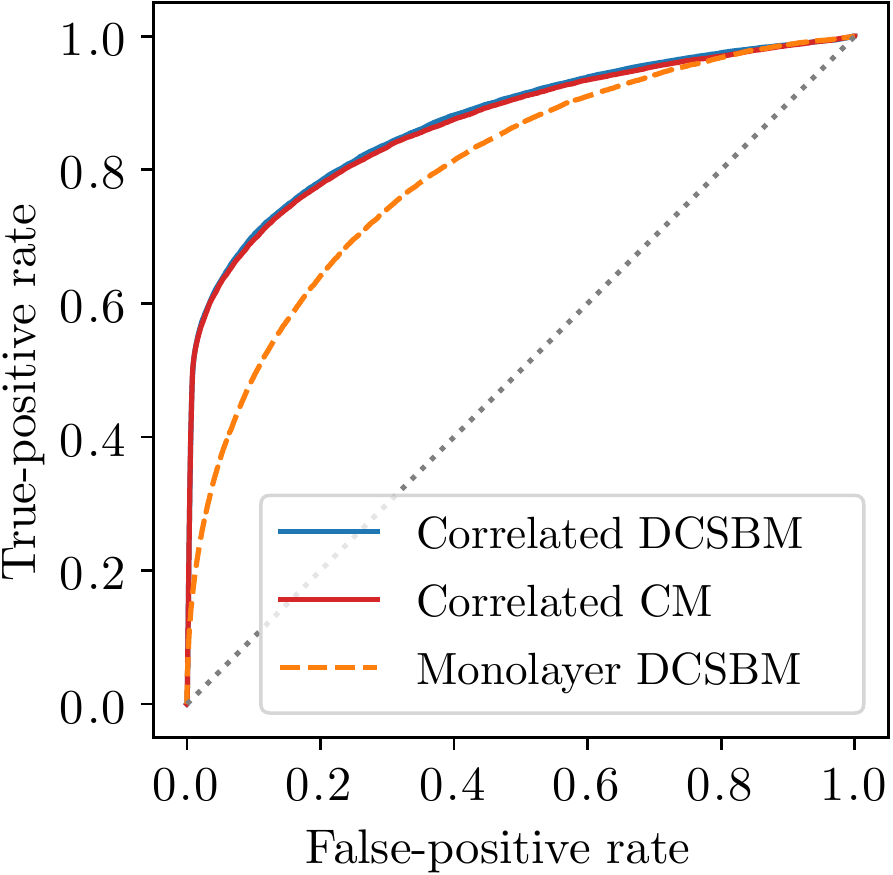}} 
\caption{ROC curves for the edge-prediction task using $5$-fold cross-validation on two temporal networks, {\sc ShoppingMod} and {\sc ShoppingSBM}. We consider both degree-corrected models and models without degree correction. The dotted diagonal line in each plot indicates the expected ROC curve for a random classifier. All other curves lie above this diagonal line, suggesting that they have predictive power. In all cases, correlated multilayer models outperform their monolayer counterparts. For both networks, correlated SBMs outperform correlated ER models, whereas correlated DCSBMs perform similarly to correlated CMs [as illustrated by the almost overlapping curves in panels (b) and (d)].}
\label{fig:rocCurvesShopping}
\end{figure*}

As with our tests on synthetic networks in Sec.~\ref{subsec:edgePredictionSynthetic}, we use $5$-fold cross-validation to assess edge-prediction performance. We summarize the AUC values of our various correlated multilayer models and the monolayer baselines in Table~\ref{table:AUCshopping}, and we show sample ROC curves in Fig.~\ref{fig:rocCurvesShopping}. We make a few observations about these results. First, our approximation $\mathrm{AUC} \approx (1+\rho)/2$ for correlated ER models is very accurate for these two networks, whose correlations are approximately $0.44$ and $0.48$, respectively. Second, our correlated multilayer models outperform the monolayer baselines for both networks. In particular, the very simple correlated ER model --- which assigns one of two probabilities to edges, as indicated in Table \ref{table:edgePredictionSummary} --- performs about as well as the more sophisticated monolayer DCSBM for the {\sc ShoppingMod} network. Third, as expected, AUC values are systematically larger for {\sc ShoppingSBM} than for {\sc ShoppingMod}. Finally, although incorporating mesoscale structure leads to better performance when there is no degree correction, this does not seem to be the case for degree-corrected models, as correlated DCSBMs do not perform significantly better than correlated CMs. This is also apparent in Figs.~\ref{fig:rocCurvesShopping}(b,d), where we observe almost identical ROC curves for the two models. This result suggests that, for some networks, taking into account layer correlations and degree heterogeneity alleviates the need to also consider mesoscale structure when performing edge prediction.
This observation has practical implications, as a correlated CM is much easier than a correlated DCSBM to fit to data and to use for edge prediction. However, for recommendation systems, there are situations in which fitting a correlated DCSBM is beneficial, even if its edge-prediction performance is similar to that of a correlated CM. For instance, one may wish to identify relevant customers for a chosen product, irrespective of how much they buy (i.e., their degree). SBMs are able to distinguish between customers with equal degrees and identify those with the greatest predisposition to buy a particular product, whereas CMs are not.

{We also note a result from \cite{davis2006}
that a curve lies completely above (i.e., ``dominates") another ROC curve if and only if the same relationship holds for the associated PR curves. 
This result implies for almost all of the curves in Fig.~\ref{fig:rocCurvesShopping} that the rankings of our models based on AUCs are almost identical to those that we would obtain if we instead base them on the areas under PR curves. The only curves whose ranking when we use PR curves is unclear from this result are the correlated ER and monolayer SBM curves in Fig.~\ref{fig:rocCurvesShopping}(c).}

{It is informative to consider the source of false positives in Fig.~\ref{fig:rocCurvesShopping}. Because the correlation between layers is positive, we are likely to predict an edge where none exists when all of the following conditions hold: (1) nodes $(i, j)$ are adjacent in one layer but not the other; (2) the node pair $(i, j)$ belongs to an edge bundle $(r, s)$ with a large layer correlation $\rho_{rs}$; and (3) nodes $i$ and $j$ have large degrees. Note that condition (c) applies only for correlated models with degree correction.} 

%\FloatBarrier

%%%%%

\section{Conclusions and Discussion}\label{sec:conclusions}

We introduced models of multilayer networks in which edges that connect the same nodes in different layers are not independent. 
In comparison to models {without edge correlations}, our
models offer an improved representation of many empirical networks, as interlayer correlations are a common phenomenon: flights between major airports are serviced by multiple airlines, individuals interact repeatedly with the same people, consumers often buy the same products over time, and so on. Among other potential applications, one can use our models to improve edge prediction, to study the graph-matching problem on more realistic benchmark networks, and to calculate layer correlations as insightful summary statistics for networks.

To model layer correlations, we used bivariate Bernoulli random variables to generate edges simultaneously in two network layers. (See \cite{pamfil2018thesis} for derivations using Poisson random variables.) Correlated Bernoulli stochastic block models were proposed previously \cite{lyzinski2015}, although only as forward models for generating networks, rather than for performing inference given empirical data. Another key contribution of our work is a degree-corrected variant of such a model. The maximum-likelihood equations are significantly more difficult to solve in this case, but we were able to make useful simplifications with suitable approximations. Notably, these simplified equations closely approximate those for models without degree correction for networks with almost homogeneous degree distributions.

The models in the present paper that incorporate some mesoscale structure $\vec{g}$ assume that such structure is given. This setup has the benefit that one can use any desired algorithms to produce a network partition, including ones that operate on weighted or annotated networks or that use nonstandard null models in a modularity objective function. This makes our approach for analyzing correlations suitable for a wide variety of applications.

Fitting a correlated SBM to network data yields a correlation value $\rho_{rs}$ for each edge bundle $(r,s)$. We have defined an effective correlation that combines all of these values into a single measure of similarity between two layers. Notably, the value of the effective correlation is independent of a network's mesoscale structure, making it extremely easy to compute (see Eqn.~\eqref{eqn:rhoERBernoulliMLE}). We illustrated this method of assessing layer similarity for multiplex networks from social, biological, and other domains.

Another application of our work is to edge prediction in multilayer networks. Our numerical experiments revealed that simple correlated models (e.g., a correlated configuration model or a correlated SBM without degree correction) can outperform monolayer DCSBMs {in terms of AUC} even for moderate correlation values. We also observed such improved performance for consumer--product networks, which have significant layer correlations ($\rho \approx 0.45$). We expect that a correlated multilayer DCSBM will typically outperform a monolayer DCSBM for most empirical networks, even when there are lower levels of correlation. 

There are many interesting ways to build further on our work. For example, it would be useful to be able to model all layers simultaneously, rather than in a pairwise fashion, especially for multiplex networks (in which layers do not have a natural ordering). One challenge is that a multivariate Bernoulli distribution of dimension $L$ has $2^L-1$ parameters; this grows quickly with the number $L$ of layers. For a temporal setting, we have proposed generating correlated networks in a sequential way by conditioning each layer on the previous one. In some cases, it will be useful to relax this ``memoryless" assumption and condition a layer on all previous layers, rather than only on the most recent layer. For example, the purchases of shoppers in December one year are strongly related not only to their purchases in November, but also to what they bought in December during the previous year. 

In the present paper, we have not considered the case of nonidentical mesoscale structure across layers. (See \cite{pamfil2018thesis} for a possible approach.) Additionally, although one can account for edge weights when fitting a block structure $\vec{g}$ to use with our models, the rest of our derivations apply only to unweighted networks. Modeling correlated weighted networks entails prescribing both edge-existence and edge-weight correlations.
Yet another idea is to derive correlated models for networks with overlapping communities. Previous research \cite{debacco2017} 
suggests that this can substantially improve edge-prediction performance. 
{For correlated DCSBMs, for example, it would be useful to learn the dependence of the joint probability $\Pp(A_{ij}^1=1,A_{ij}^2=1)$ on node degrees, rather than assume the parametric form in Eqn.~\eqref{eqn:corrDCSBMq}. This is a challenging problem for which maximum-likelihood estimation is unlikely to be a suitable tool, so it falls outside the scope of the present paper.}

Our work is also a starting point for designing algorithms to detect correlated communities in networks by inferring $\vec{g}$ alongside other model parameters. 
A practical outcome of such an algorithm would be a set of communities that persist across layers if and only if the edges in those communities are sufficiently highly correlated with each other. Such an approach would offer a new interpretation of what it means for a community to span multiple layers \cite{bazzi2016,pamfil2018}.

{Lastly, although we performed edge prediction in an unsupervised manner, it is also possible to use estimated correlations as features in supervised models \cite{jalili2017,pujari2015,hristova2016,mandal2018} to improve their performance.}

In summary, our work highlights the importance of relaxing edge-independence assumptions in statistical models of network data. Doing so provides richer insights into the structure of empirical networks, improves edge-prediction performance, and yields more realistic models on which to test community-detection, graph-matching, and other types of algorithms. 

%%%%%

\begin{acknowledgments}

ARP was funded by the EPSRC Centre for Doctoral Training in Industrially Focused Mathematical Modelling (EP/L015803/1) in partnership with dunnhumby. We are grateful to dunnhumby for providing access to grocery-shopping data and to Shreena Patel and Rosie Prior for many helpful discussions. We also thank Junbiao Lu for helpful comments.

\end{acknowledgments}

%%% References %%%

%%% Appendix %%%

\appendix

\section{Variance of ML Estimates}\label{app:variance}

{
In Sec.~\ref{sec:corrModels}, we derived ML parameter estimates for various types of correlated network models. In this appendix, we show how to obtain the variance of these estimates and we illustrate how these variances scale with network size (i.e., the number of nodes). 
We focus on correlated ER models and the corresponding log-likelihood \eqref{eqn:corrERloglikelihood}. The same approach also works for the other types of models that we examined. 
}

{
Let $\vec{\beta}=[p_1~p_2~q]^\top$ denote the vector of parameters for a correlated ER model. Under mild conditions \cite{sweeting1980}, the ML estimate $\widehat{\vec{\beta}}$ converges in distribution to a multivariate normal distribution as the number $N$ of nodes tends to infinity. {For large but finite $N$, the quantity $\widehat{\vec{\beta}}$ is distributed approximately according to $\mathcal{N}(\vec{\beta}^*,\mathcal{I}^{-1}(\vec{\beta}^*))$,
where $\vec{\beta}^*$ is the true-parameter vector and $\mathcal{I}^{-1}(\vec{\beta}^*)$ is the inverse of the Fisher information matrix evaluated at the parameter values in $\vec{\beta}^*$.} The diagonal entries of $\mathcal{I}^{-1}(\vec{\beta}^*)$ provide variance estimates for $p_1$, $p_2$, and $q$. 
}

{
To illustrate how these variance estimates scale with the number $N$ of nodes, we simulate correlated ER networks with $p_1=0.1$, $p_2=0.085$, and $q=0.05$ (which correspond to a correlation of $\rho=0.5$) for different network sizes. In Fig.~\ref{fig:varianceFisher}, we plot the 95\% confidence intervals around the ML estimates for each of the three parameters in our model. We find empirically that the variance scales with $1/N^2$, so the standard deviation (and thus the width of the $95\%$ confidence intervals) scales with $1/N$. 
}

\begin{figure}[ht!]
\centering
\includegraphics[width=0.9\columnwidth]{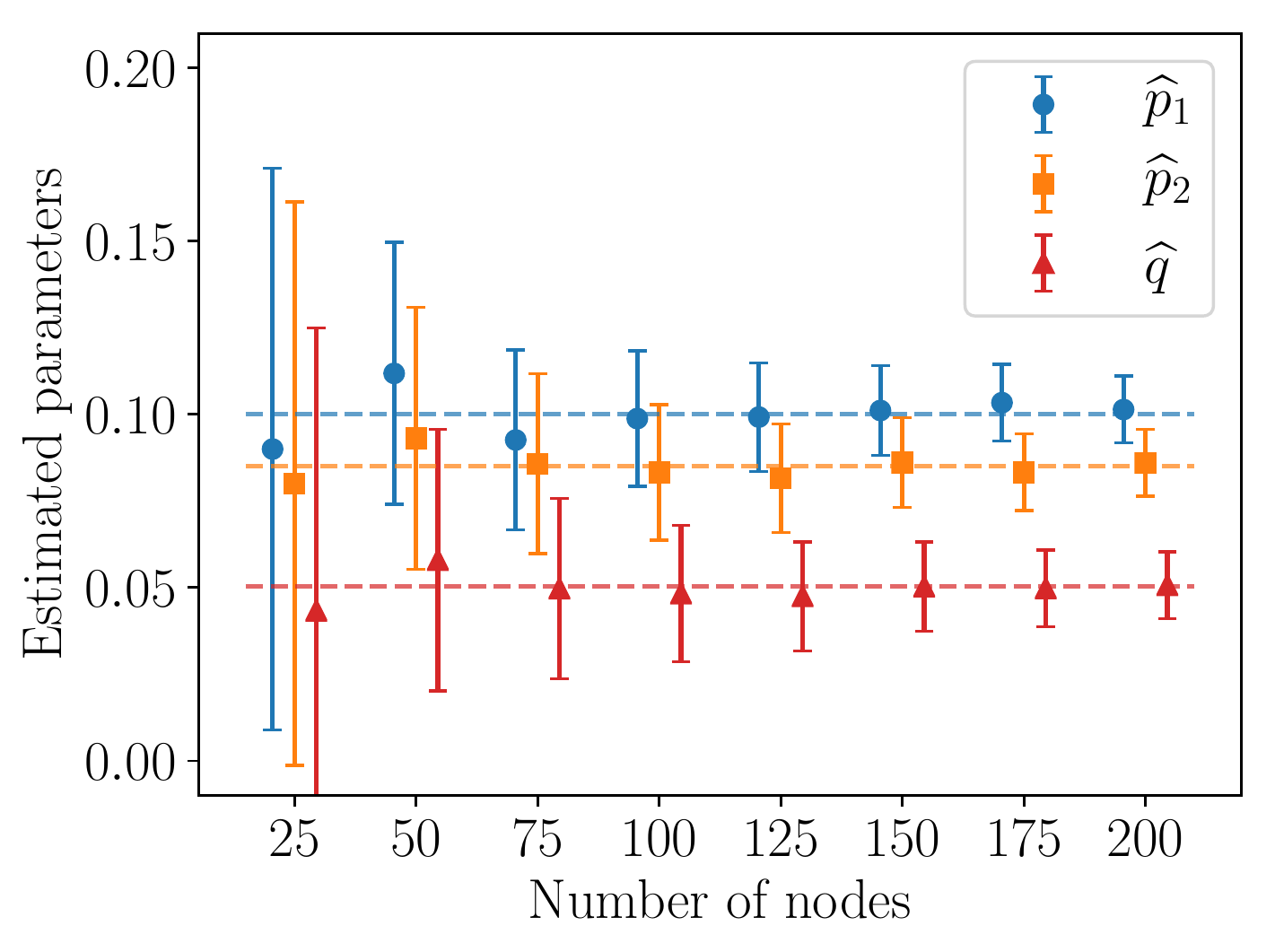}
\caption{ML estimates with $95\%$ confidence intervals that we obtain from the inverse of the Fisher information matrix for different values of the number $N$ of nodes. The true parameter values (dashed horizontal lines) are
$p_1=0.1$, $p_2=0.085$, and $q=0.05$. The width of the confidence intervals scales approximately as $1/N$.
}
\label{fig:varianceFisher}
\end{figure}

%%%%%

\section{Maximizing the Full Log-Likelihood Versus our Approximate Log-Likelihood for Correlated DCSBMs}\label{app:approximation}

{
In Sec.~\ref{subsec:DCSBM}, we derived an approximation \eqref{eqn:BernoulliDC1-3} of the log-likelihood for correlated DCSBMs that enables a more efficient estimation of the parameters than maximizing the exact log-likelihood \eqref{eqn:DCSBMlogL}. We now use simulated data to illustrate how the results that one obtains using this approximation compare with maximizing the original likelihood.
}

{
We simulate networks using the {\sc CorrDCSBM} benchmark (see Sec. \ref{subsec:edgePredictionSynthetic}) with $N=1000$ nodes, $K=5$ communities, a mixing parameter of $\mu=0.3$, and a correlation of $\rho=0.5$. 
For $N=2000$ nodes, which we used in other experiments in the paper, we find that maximizing the full log-likelihood is prohibitively slow. Even for $N=1000$, obtaining estimates from the full log-likelihood takes about 30 minutes in total, whereas the calculation runs in about $5$ seconds on a typical laptop when we use the approximate log-likelihood. 
}

{
The quality of our approximation depends strongly on the shape of the degree distribution: accuracy degrades for distributions with larger variances and heavier tails. We illustrate this behavior in Fig.~\ref{fig:fullVsApproxDCSBM}, where we plot the full-likelihood estimates [see Eqn. \eqref{eqn:DCSBMlogL}] versus the approximate-likelihood estimates [see Eqn.~\eqref{eqn:BernoulliDC1-3}] for two networks. For both networks, we sample degrees from truncated power-law distributions using the code in \cite{benchmark}. In Fig.~\ref{fig:fullVsApproxDCSBM}(a), we choose a relatively narrow degree distribution (with $k_\mathrm{exp}=0$, a minimum degree of $18$, and a maximum degrees of $22$). In this case, the parameter values that we estimate using the approximate log-likelihood closely match those from the full log-likelihood. For Fig.~\ref{fig:fullVsApproxDCSBM}(b), we choose a relatively wide degree distribution (with $k_\mathrm{exp}=-2$, a minimum degree of $10$, and a maximum degree of $50$). In this case, the approximate log-likelihood tends to overestimate parameter values, especially for larger values of these parameters. See the top-right corner of Fig.~\ref{fig:fullVsApproxDCSBM}(b).
}

{
As this example illustrates, there is a trade-off between accuracy and speed. It may be possible to derive better approximations, such as by considering a second-order expansion instead of a first-order expansion with respect to the quantities $\varepsilon_{ij}^1$ and $\varepsilon_{ij}^2$ or by adding corrections for large-degree nodes to the approximate log-likelihood. 
} 

\begin{figure}[ht!]
\centering
\subfloat[$k_\mathrm{exp}=0$, $k_\mathrm{min}=18$, $k_\mathrm{max}=20$]{\includegraphics[height=0.22\textwidth]{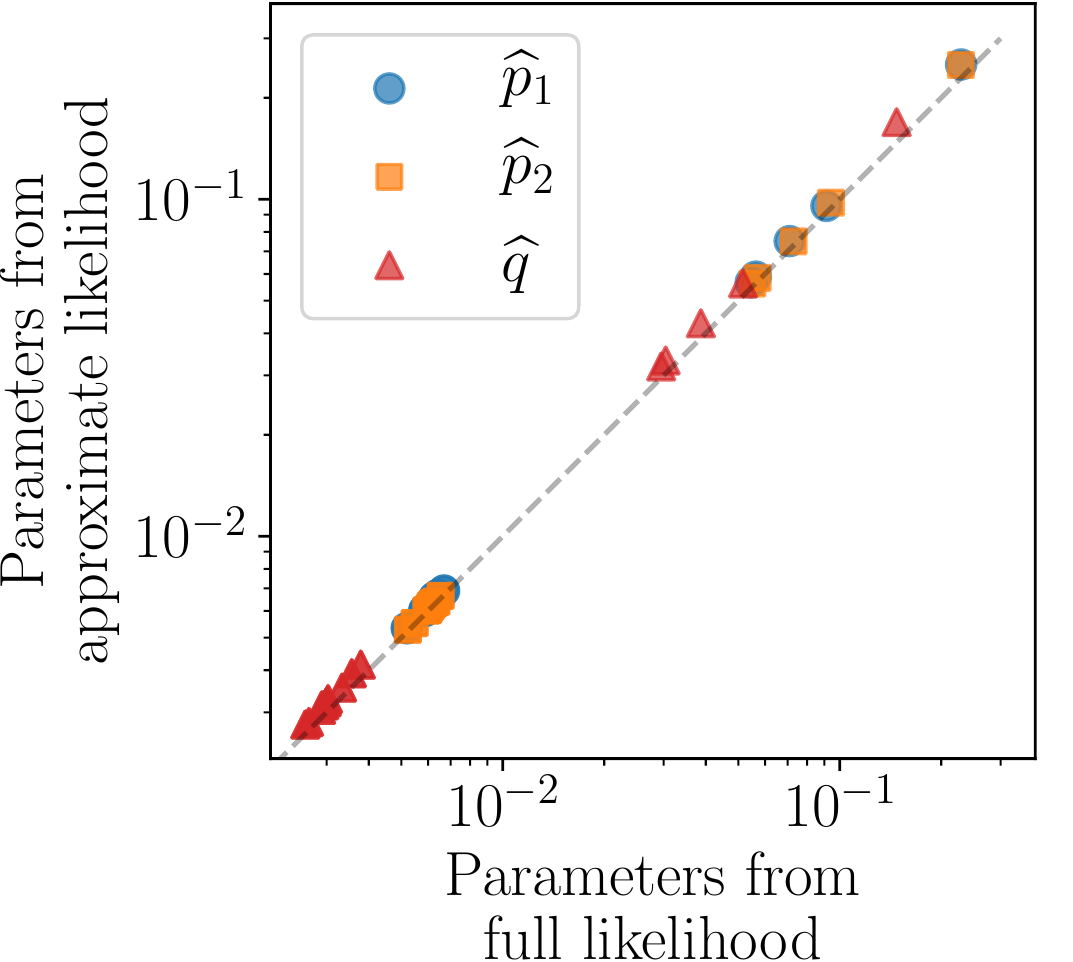}} 
\hspace{0.2em}
\subfloat[$k_\mathrm{exp}=-2$, $k_\mathrm{min}=10$, $k_\mathrm{max}=50$]{\includegraphics[height=0.22\textwidth]{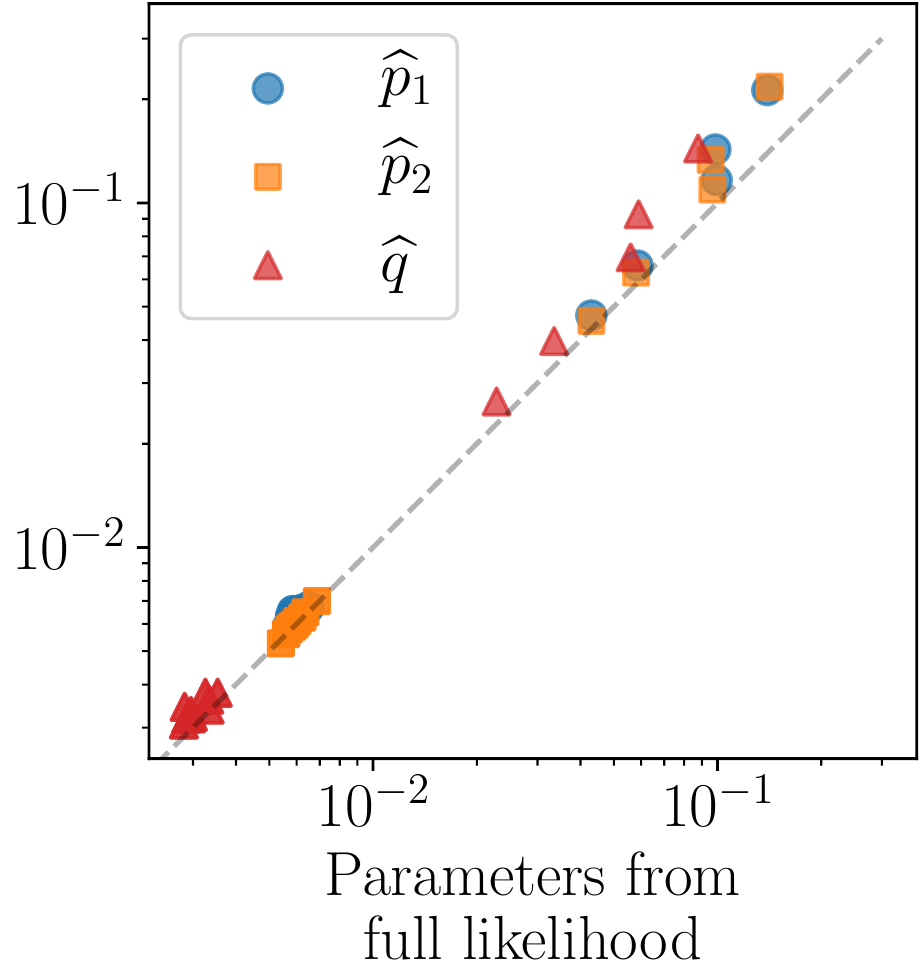}\hspace{0.8em}}
\\
\vspace{0.3em}
\caption{ML estimates from the approximate log-likelihood \eqref{eqn:BernoulliDC1-3} versus ML estimates from the full log-likelihood \eqref{eqn:DCSBMlogL}. (a) Example of a network with a relatively narrow degree distribution. For this example, the approximation works well. (b) Example of a network with a relatively wide degree distribution. For this example, we observe some discrepancies between the two sets of estimates.} 
\label{fig:fullVsApproxDCSBM}
\end{figure}

%%%%%

\section{Edge-Prediction AUC as a Function of the Pearson Correlation $\rho$}\label{app:AUCproof}

We establish the following result.
\begin{proposition}
The AUC for a correlated ER model is an affine function of the Pearson correlation $\rho$. In particular, when $p_1 \approx p_2$ or $p_1 \approx 1-p_2$, we have $\mathrm{AUC}_{\mathrm{ER}} \approx (1+|\rho|)/2$.
\end{proposition}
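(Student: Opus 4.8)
The plan is to exploit the fact that, for a correlated ER model, the edge-prediction score of a holdout pair $(i,j)$ takes only one of two values --- namely $\widehat{q}/\widehat{p}_1$ if $A^1_{ij}=1$ and $(\widehat{p}_2-\widehat{q})/(1-\widehat{p}_1)$ if $A^1_{ij}=0$, as recorded in Table~\ref{table:edgePredictionSummary} --- so the induced ranking of node pairs is essentially the binary partition ``edge in layer~$1$'' versus ``non-edge in layer~$1$''. The inequality $\widehat{q}>\widehat{p}_1\widehat{p}_2$ (equivalently $\widehat{\rho}>0$) puts the first group above the second, and the ordering is reversed when $\widehat{\rho}<0$. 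Since the AUC equals the probability that a uniformly random true positive outranks a uniformly random true negative (with ties contributing $1/2$), I would first reduce the AUC of any two-valued score to $\tfrac12+\tfrac12(a-b)$ when $\widehat{\rho}>0$ (and to $\tfrac12-\tfrac12(a-b)$ when $\widehat{\rho}<0$), where $a:=\Pp(A^1_{ij}=1\mid A^2_{ij}=1)$ and $b:=\Pp(A^1_{ij}=1\mid A^2_{ij}=0)$ are the two group frequencies. This reduction is short bookkeeping: writing $m_1,m_0$ for the numbers of holdout positives with, respectively, $A^1=1$ and $A^1=0$, and $n_1,n_0$ for the analogous counts among the negatives, the empirical AUC equals $\bigl(m_1 n_0+\tfrac12 m_1 n_1+\tfrac12 m_0 n_0\bigr)/(mn)$, which collapses to $\tfrac12+\tfrac12(m_1/m-n_1/n)$ and hence to the claimed form once $m_1/m\to a$ and $n_1/n\to b$.

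Next I would compute $a$ and $b$ directly from the joint law \eqref{eqn:corrER1-4}. Bayes' rule gives $a=q/p_2$ and $b=(p_1-q)/(1-p_2)$, so $a-b=(q-p_1p_2)\big/\bigl(p_2(1-p_2)\bigr)$. Substituting the definition \eqref{eqn:rhoERBernoulli} of the Pearson correlation, i.e.\ $q-p_1p_2=\rho\sqrt{p_1(1-p_1)p_2(1-p_2)}$, yields
\[
\mathrm{AUC}_{\mathrm{ER}}=\frac12+\frac{|\rho|}{2}\sqrt{\frac{p_1(1-p_1)}{p_2(1-p_2)}}\,,
\]
which on each of the branches $\rho\ge 0$ and $\rho\le 0$ is an affine function of $\rho$ with intercept $1/2$ and slope $\pm\tfrac12\sqrt{p_1(1-p_1)/\!\left(p_2(1-p_2)\right)}$. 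The final step is the stated specialization: when $p_1\approx p_2$ or $p_1\approx 1-p_2$, the quantities $p_1(1-p_1)$ and $p_2(1-p_2)$ agree to leading order, the square-root factor is $\approx 1$, and we recover $\mathrm{AUC}_{\mathrm{ER}}\approx(1+|\rho|)/2$.

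Two points require care, and I expect the second to be the only genuine obstacle. First, the scores are defined via the estimated parameters $\widehat{p}_1,\widehat{p}_2,\widehat{q}$ rather than the true ones; but the holdout set grows with $N$ and, by the consistency of the ML estimates (cf.\ App.~\ref{app:variance}), these converge to the true values, so $\mathrm{sgn}(\widehat{\rho})=\mathrm{sgn}(\rho)$ for large $N$ and the scores induce exactly the partition used above. Second, and more delicate, one must justify that the empirical AUC concentrates on the population formula: this needs a law-of-large-numbers statement for the group frequencies $m_1/m$ and $n_1/n$ over the random holdout pairs, which follows from the fact that distinct node pairs are independent under the model but should be stated precisely. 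Making this concentration argument rigorous --- together with being explicit about the scope of ``affine'' given the $|\rho|$ kink at the origin, and about the residual factor $\sqrt{p_1(1-p_1)/\!\left(p_2(1-p_2)\right)}$ that survives away from the two special regimes --- is where most of the attention is needed; the remaining algebra is routine.
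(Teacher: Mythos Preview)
Your proposal is correct and follows essentially the same argument as the paper's proof: both exploit the two-valued nature of the score to reduce to $\mathrm{AUC}=\tfrac12+\tfrac12(a-b)$ with $a=q/p_2$ and $b=(p_1-q)/(1-p_2)$, and then substitute the definition of $\rho$. The only cosmetic differences are that the paper derives the reduction via the geometry of the piecewise-linear ROC curve rather than the Mann--Whitney interpretation you use, and leaves the slope in the unsimplified form $\tfrac12\sqrt{p_1(1-p_1)}\bigl(\sqrt{(1-p_2)/p_2}+\sqrt{p_2/(1-p_2)}\bigr)$, which is algebraically identical to your $\tfrac12\sqrt{p_1(1-p_1)/\bigl(p_2(1-p_2)\bigr)}$.
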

\begin{proof}
Suppose that $\rho > 0$. (The case $\rho \leq 0$ is similar.) With a correlated ER model, all unobserved interactions $(i,j)$ in the second layer have one of two probabilities: $q/p_1$ if $A_{ij}^1=1$ and $(p_2-q)/(1-p_1)$ if $A_{ij}^1=0$. Because $\rho > 0$, we have $q > p_1p_2$, which implies that $q/p_1 > (p_2-q)/(1-p_1)$. Selecting a threshold between these two probabilities amounts to predicting that everything that is an edge in the first layer is also an edge in the second layer and that everything that is not an edge in the first layer is also not an edge in the second layer. Let $a$ and $b$ denote the TPR and FPR, respectively, at such an intermediate threshold. In this case, $a$ and $b$ are the coordinates of the point at which the slope of the ROC curve changes. See the illustration in Fig.~\ref{fig:AUCDiagram}.

\begin{figure}[ht!]
%\centering
\includegraphics[width=0.3\textwidth]{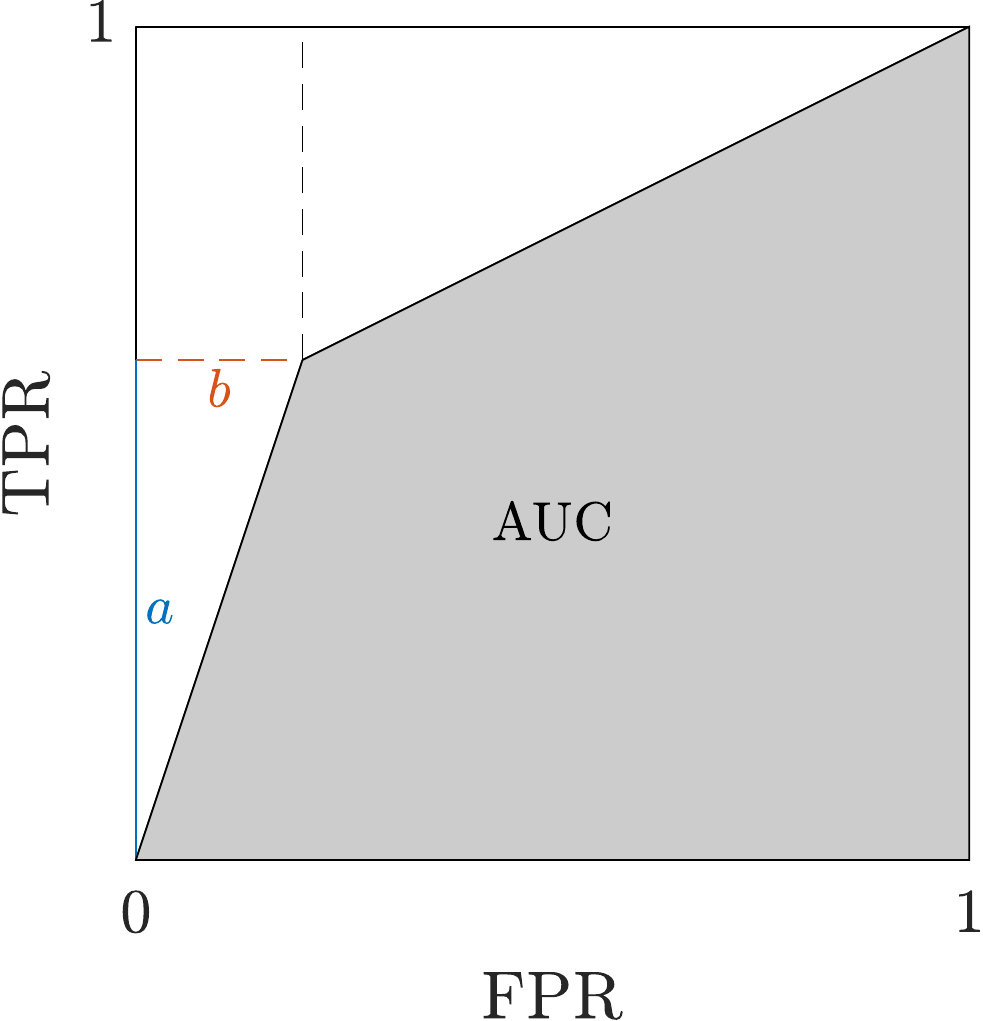}
\caption{Diagram of the ROC curve for an ER model, which assigns one of two possible edge probabilities to each pair of nodes.}
\label{fig:AUCDiagram}
\end{figure}

By straightforward geometry, 
\begin{equation*}
	\mathrm{AUC}_\mathrm{ER}=1-\frac{ab}{2}-\frac{(1-a)(1-b)}{2}-(1-a)b=\frac{1}{2}+\frac{a-b}{2}
\end{equation*}
is the area under this ROC curve. The next step is to estimate $a$ and $b$. {With a correlated ER model}, the number of true positives is proportional \footnote{It is not exactly equal, because with $K$-fold cross validation, we are performing a prediction on only $1/K$ of the data at a time.} to $e_{11}$; the model's prediction is correct every time that an edge that is present in the first layer is also present in the second layer. To find the TPR, one needs to divide this quantity by the number of edges (there are $e_{11}+e_{01}$ of them) in the second layer. 
We obtain
\begin{equation}\label{eqn:ERapproxa}
	a \approx \frac{e_{11}}{e_{11}+e_{01}} \approx \frac{q}{p_2}=p_1+\rho\sqrt{\frac{1-p_2}{p_2}p_1(1-p_1)}\,.
\end{equation}
Similarly, every time an edge that is present in the first layer is not present in the second layer counts as an incorrect prediction of the model. Therefore, the number of false positives is proportional to $e_{10}$. Dividing this by the number of non-edges in the second layer yields
\begin{equation}\label{eqn:ERapproxb}
	b \approx \frac{e_{10}}{e_{10}+e_{00}} \approx \frac{p_1-q}{1-p_2}=p_1-\rho\sqrt{\frac{p_2}{1-p_2}p_1(1-p_1)}\,.
\end{equation}
From \eqref{eqn:ERapproxa} and \eqref{eqn:ERapproxb}, it follows that
\begin{equation*}
	\mathrm{AUC}_\mathrm{ER} \approx \frac{1}{2}+\frac{\rho}{2}\sqrt{p_1(1-p_1)}\left(\sqrt{\frac{1-p_2}{p_2}}+\sqrt{\frac{p_2}{1-p_2}}\right)\,,
\end{equation*}
which is an affine function of $\rho$.
When $p_1 \approx p_2$ or $p_1 \approx 1-p_2$, as is the case in Fig.~\ref{fig:edgePredROC}, we obtain $\mathrm{AUC}_\mathrm{ER} \approx (1+|\rho|)/2$, as desired. Using a similar argument, one can show that the same result holds (with the same assumptions on $p_1$ and $p_2$) when $\rho \leq 0$. 
\end{proof} 
Given that the correlated SBM curves from Fig.~\ref{fig:edgePred} also appear to depend linearly on $\rho$, we believe that it is possible to establish similar results for correlated models that incorporate mesoscale structure.

%%%%%

\section{Data Sets}\label{app:dataSets}

We provide brief descriptions of the multiplex networks that we analyzed in Sec.~\ref{subsec:layerCorr}. For weighted networks, we disregard edge weights when calculating layer correlations. We downloaded these networks, aside from the YouTube and \textit{P. falciparum} data sets, from \url{https://comunelab.fbk.eu/data.php}.

%%%

\subsection{CS Aarhus}

This is an undirected and unweighted social network of offline and online relationships between $N=61$ members of the Department of Computer Science at Aarhus University \cite{magnani2013}. There are $T=5$ layers: (1) regularly eating lunch together; (2) friendships on Facebook; (3) co-authorship; (4) leisure activities; and (5) working together.

%%%%

\subsection{Lazega Law Firm}

This directed, unweighted network encompasses interactions between $N = 71$ partners and associates who work at the same law firm \cite{lazega2001}. The network has $T = 3$ layers that encode co-work, friendship, and advice relationships.

%%%%

\subsection{YouTube}

This is an undirected, weighted network of interactions between $N=15088$ YouTube users \cite{tang2009}. There are $T=5$ types of interactions: (1) direct contacts (``friendships''); (2) shared contacts; (3) shared subscriptions; (4) shared subscribers; and (5) shared favorites.

%%%%

\subsection{\textit{C. elegans} Connectome}

This is a directed, unweighted network of synaptic connections between $N=279$ neurons of the nematode \textit{C. elegans} \cite{chen2006}. There are $T=3$ layers, which correspond to electric, chemical monadic (``MonoSyn''), and chemical polyadic (``PolySyn'') junctions. 

%%%%%

\subsection{\textit{P. falciparum} Genes}

This is an undirected, unweighted network of $N=307$ recombinant genes from the parasite \textit{P. falciparum}, which causes malaria \cite{larremore2013}. There are $T=9$ layers that correspond to distinct highly variables regions (HVRs), in which these recombinations occur. 
Two genes are adjacent in a layer if they share a substring whose length is statistically significant. 

%%%%%

\subsection{\textit{Homo sapiens} Proteins}

This is a directed, unweighted network of interactions between $N=18222$ proteins in \textit{Homo sapiens} \cite{dedomenico2015muxviz}. There are $T=7$ layers, which correspond to the following types of interactions: (1) direct interactions; (2) physical associations; (3) suppressive genetic interactions; (4) association; (5) colocalization; (6) additive genetic interactions; and (7) synthetic genetic interactions. The original data is from BioGRID \cite{stark2006}, a public database of protein interactions (for humans as well as other organisms) that is curated from different types of experiments.

%%%%%

\subsection{Food and Agriculture Organization (FAO) Trade}

This is a weighted, directed network of food imports and exports during the year $2010$ between $N=214$ countries \cite{dedomenico2015structural}. There are $T=364$ layers, which correspond to different food products. 

%%%%

\subsection{European Union Air Transportation}

This is an undirected, unweighted network of flights between $N=450$ airports in Europe \cite{cardillo2013}. There are $T=37$ layers, each of which corresponds to a different airline.

%%%%

\subsection{ArXiv Collaborations}

This is an undirected, weighted coauthorship network between $N=14489$ network scientists \cite{dedomenico2015}. There are $T=13$ layers, which correspond to different arXiv subject areas: ``physics.soc-ph'', ``physics.data-an'', ``physics.bio-ph'', ``math-ph'', ``math.OC'', ``cond-mat.dis-nn'', ``cond-mat.stat-mech'', ``q-bio.MN'', ``q-bio'', ``q-bio.BM'', ``nlin.AO'', ``cs.SI'', and ``cs.CV''. 
 
 %%%%%

%\bibliographystyle{apsrev4-2}
%\bibliography{ref2}

%apsrev4-2.bst 2019-01-14 (MD) hand-edited version of apsrev4-1.bst
%Control: key (0)
%Control: author (72) initials jnrlst
%Control: editor formatted (1) identically to author
%Control: production of article title (-1) disabled
%Control: page (0) single
%Control: year (1) truncated
%Control: production of eprint (0) enabled
%

%%%%

\end{document}